\newcommand{\todo}[1]{\noindent\textcolor{Green}{\textsc{todo:} #1}}
\newtheorem{theorem}{Theorem}
\newtheorem{observation}{Observation}[section]
\newtheorem{corollary}[theorem]{Corollary}
\newtheorem{definition}{Definition}
\newtheorem{lemma}{Lemma}
\newcommand{\zz}{\mathbb{Z}}
\newcommand{\nn}{\mathbb{N}}
\newcommand{\qq}{\mathbb{Q}}
\DeclarePairedDelimiter{\nint}\lfloor\rceil
\DeclarePairedDelimiter{\floor}\lfloor\rfloor
\newcommand{\norm}[1]{\left\lVert #1 \right\rVert}
\newcommand{\abs}[1]{\left\lvert #1 \right\rvert}
\DeclareMathOperator{\lattice}{\mathcal{L}}
\DeclareMathOperator{\OTilde}{\ensuremath{\tilde{O}}}
\DeclareMathOperator{\rem}{rem}
\newcommand{\ie}{i.\,e.}
\newcommand{\eg}{e.\,g.}
\newcommand*\Let[2]{\State #1 $\gets$ #2}
\algrenewcommand\alglinenumber[1]{\sf\footnotesize \texttt{#1\,}}%
\algrenewcommand\algorithmicrequire{\textbf{Precondition:}}
\algrenewcommand\algorithmicensure{\textbf{Postcondition:}}
\title{Simple Lattice Basis Computation - The Generalization\\ of the Euclidean Algorithm \thanks{This research was supported by German Research Foundation (DFG) project KL 3408/1-1}}
\author{
	Kim-Manuel Klein
	\\\small Kiel University
	\\\small kimmanuel.klein@uni-luebeck.de
	\and
	Janina Reuter
	\\\small Kiel University
	\\\small janina.reuter@email.uni-kiel.de
}
\date{}
\begin{document}
\maketitle
\begin{abstract}
The Euclidean algorithm is one of the oldest algorithms known to mankind. Given two integral numbers $a_1$ and $a_2$, it computes the greatest common divisor (gcd) of $a_1$ and $a_2$ in a very elegant way. From a lattice perspective, it computes a basis of the sum of two one-dimensional lattices $a_1 \mathbb{Z}$ and $a_2 \mathbb{Z}$ as $\gcd(a_1,a_2) \mathbb{Z} = a_1 \mathbb{Z} + a_2 \mathbb{Z}$. In this paper, we show that the classical Euclidean algorithm can be adapted in a very natural way to compute a basis of a general lattice $\lattice(a_1, \ldots , a_m)$ given vectors $a_1, \ldots , a_m \in \mathbb{Z}^n$ with $m> \mathrm{rank}(a_1, \ldots ,a_m)$. Similar to the Euclidean algorithm, our algorithm is very easy to describe and implement and can be written within 12 lines of pseudocode.

While the Euclidean algorithm halves the largest number in every iteration, our generalized algorithm halves the determinant of a full rank subsystem leading to at most $\log (\det B)$ many iterations, for some initial subsystem $B$. Therefore, we can compute a basis of the lattice using at most $\OTilde((m-n)n\log(\det B) + mn^{\omega-1}\log(\norm{A}_\infty))$ arithmetic operations, where $\omega$ is the matrix multiplication exponent and $A = (a_1, \ldots, a_m)$. Even using the worst case Hadamard bound for the determinant, our algorithm improves upon existing algorithm.

Another major advantage of our algorithm is that we can bound the entries of the resulting lattice basis by $O(n^2\cdot \norm{A}_{\infty})$ using a simple pivoting rule. This is in contrast to the typical approach for computing lattice basis, where the Hermite normal form (HNF) is used. In the HNF, entries can be as large as the determinant and hence can only be bounded by an exponential term.

\end{abstract}


\section{Introduction}\label{sec:intro}
Given two integral numbers $a_1$ and $a_2$, the Euclidean algorithm computes the greatest common divisor (gcd) of $a_1$ and $a_2$ in a very elegant way. Starting with $s = a_1$ and $t= a_2$, a residue $r$ is being computed by setting
\begin{align*}
    r = \min_{x \in \zz} \{ r \in \zz \mid s x + r = t \} = \min \{ t \pmod s, |(t \pmod s) - s| \}.
\end{align*}
This procedure is continued iteratively with $s = t$ and $t = r$ until $r$ equals $0$. Since $r \leq \lfloor t/2 \rfloor$ the algorithm terminates after at most $\log(\min \{ a_1, a_2\})$ many iterations.

An alternative interpretation of the gcd or the Euclidean algorithm is the following: Consider all integers that are divisible by $a_1$ or respectively $a_2$, which is the set $a_1 \zz$ or respectively the set $a_2 \zz$. Consider their sum (i.e. Minkowski sum)
\begin{align*}
    A = a_1 \zz + a_2 \zz = \{a + b \mid a \in a_1 \zz, b \in a_2 \zz \}.
\end{align*}
It is easy to see that the set $A$ can be generated by a single element, which is the gcd of $a_1$ and $a_2$, i.e.
\begin{align*}
    a_1 \zz + a_2 \zz = gcd(a_1, a_2) \zz.
\end{align*}
Furthermore, the set $\mathcal{L} = a_1 \zz + a_2 \zz$ is closed under addition, subtraction and scalar multiplication, which is why all values for $s,t$ and $r$, as defined above in the Euclidean algorithm, belong to $\mathcal{L}$. In the end, the smallest non-zero element for $r$ obtained by the algorithm generates $\mathcal{L}$ and hence $\mathcal{L} = r \zz = gcd(a_1,a_2) \zz$.

This interpretation does not only allow for an easy correctness proof of the Euclidean algorithm, it also allows for a generalization of the algorithm into higher dimensions. For this, we consider vectors $A_1, \ldots , A_{m} \in \zz^n$ with $m>n$ and the set of points in space generated by sums of integral multiples of the given vectors, i.e.
\begin{align*}
    \mathcal{L} = A_1 \zz + \ldots A_{m} \zz.
\end{align*}
The set $\mathcal{L}$ is called a \emph{lattice} and is generally defined for a given matrix $A$ with column vectors $A_1, \ldots , A_m$ by
\begin{align*}
    \mathcal{L}(A) = \{ \sum_{i=1}^{m} \lambda_i A_i \mid \lambda_i \in \zz \}.
\end{align*}
One of the most basic facts from lattice theory is that every lattice $\mathcal{L}$ has a basis $B$ such that $\mathcal{L}(B) = \mathcal{L}(A)$, where $B$ is a square matrix.
Note that the set $a_1 \zz + a_2 \zz$ is simply a one-dimensional lattice and in this sense the Euclidean algorithm simply computes a basis of the one-dimensional lattice with $gcd(a_1,a_2) \zz = a_1 \zz + a_2 \zz$.

Hence, morally, a multidimensional version of the Euclidean algorithm should compute for a given matrix $A = (A_1, \ldots , A_{m})$ a basis $B \in \zz^{n \times n}$ such that
\begin{align*}
    \mathcal{L}(B) = \mathcal{L}(A).
\end{align*}
The problem of computing a basis for the lattice $\lattice(A)$ is called \emph{lattice basis computation}. 
In this paper, we show that the classical Euclidean algorithm can be generalized in a very natural way to do just that. Using this approach, we improve upon the running time of existing algorithms for lattice basis computation.

\subsection{Lattice Basis computation}
\acused{hnf}\acused{snf}\acused{de}\acused{lbr}

The first property of a lattice that is typically taught in a lattice theory lecture is the fact that each lattice has a basis. Computing a basis of a lattice is one of the most basic algorithmic problems in lattice theory. Often it is required as a subroutine by other algorithms~\cite{DBLP:journals/jsc/Pohst87,DBLP:conf/eurocal/BuchmannP87,DBLP:conf/stoc/GentryPV08, DBLP:books/daglib/0018102}. There are mainly two methods on how a basis of a lattice can be computed. The most common approaches rely on either a variant of the LLL-algorithm or on computing the Hermite normal form (HNF), where the fastest algorithms all rely on the HNF. Considering these approaches however, one encounters two major problems. First, the entries of the computed basis can be as large as the determinant and therefore exponential in the dimension. Secondly and even worse, intermediate numbers on the computation might even be exponential in their bit representation. This effect is called intermediate coefficient swell. Due to this problem, it is actually not easy to show that a lattice basis can be computed in polynomial time. Kannan und Buchem~\cite{DBLP:journals/siamcomp/KannanB79} were the first ones to show that the intermediate coefficient swell can be avoided when computing the HNF and hence a lattice basis can actually be computed in polynomial time. The running time of their algorithm was later improved by  Chou and Collins\cite{DBLP:journals/siamcomp/ChouC82} and Iliopoulos \cite{DBLP:journals/siamcomp/Iliopoulos89}.


Recent and the most efficient algorithms for lattice basis computation all rely on computing the HNF, with the most efficient one being the algorithm by Storjohann and Labahn~\cite{DBLP:conf/issac/StorjohannL96}. Given a full rank matrix $A\in\zz^{n\times m}$ the \acs{hnf} can be computed by using only $\OTilde(n^{\omega}m\cdot \log\norm{A}_\infty)$ many bit operations. 
The algorithm by Labahn and Storjohann~\cite{DBLP:conf/issac/StorjohannL96} improves upon a long series of papers~\cite{DBLP:journals/siamcomp/KannanB79, DBLP:journals/siamcomp/ChouC82,DBLP:journals/siamcomp/Iliopoulos89} and has not been improved since its publication in 1996. Only in the special case that $m-n = 1$, Li and Storjohann~\cite{DBLP:conf/issac/LiS22} manage to obtain a better running time that essentially matches matrix multiplication time. 

Other recent paper considering lattice basis computation focus on properties other than improving the running time. There are several algorithms that preserve orthogonality from the original matrix, \eg{} $\norm{B^*}_\infty \leq \norm{A^*}_\infty$, or improve on the $\ell_\infty$ norm of the resulting matrix~\cite{DBLP:conf/stoc/NovocinSV11,DBLP:conf/issac/NeumaierS16}, or both~\cite{DBLP:conf/crypto/HanrotPS11,DBLP:conf/issac/LiN19,DBLP:conf/focs/CaiN97,DBLP:books/daglib/0018102}. 
Except for the \acs{hnf} based basis algorithm by Lin and Nguyen~\cite{DBLP:conf/issac/LiN19}, all of the above algorithms have a significantly higher time complexity compared to Labahn's and Storjohann's \acs{hnf} algorithm. 
The algorithm by Lin and Nguyen use existing HNF algorithms and apply a separate coefficient reduction algorithm resulting in a basis with $\ell_\infty$ norm bounded by $n\norm{A}_\infty$.

\subsection{Our Contribution}
In this paper we develop a fundamentally new approach for lattice basis computation given a matrix $A$ with column vectors $A_1, \ldots, A_m \in \zz^n$. Our approach does not rely on any normal form of a matrix or the LLL algorithm. Instead, we show a direct way to generalize the classical Euclidean algorithm to higher dimensions. After a thorough literature investigation and talking to many experts in the area, we were surprised to find out that this approach actually seems to be new.

Our approach does not suffer from intermediate coefficient growth and hence gives an easy way to show that a lattice basis can be computed in polynomial time. Furthermore, we can show that by an easy pivoting rule the resulting lattice basis has only a mild coefficient growth compared to the absolute values of the entries in the $A_i$ vectors. We can show that the entries of the resulting basis can be bounded by $O(n^2 \cdot \norm{A}_\infty)$.

Similar to the Euclidean algorithm, our algorithm chooses an initial basis $B$ from the given vectors and updates the basis according to a remainder operation and then exchanges a vector by this remainder. In every iteration, the determinant of $B$ decreases by a factor of at least $1/2$ and hence the algorithm terminates after at most $\log \det(B)$ many iterations. Similar to the Euclidean algorithm, our algorithms can be easily described and implemented. 

We develop data structures for our novel algorithmic approach and analyze the running time of our algorithms comparing to state of the art algorithms for lattice basis computation.
But first, how do we measure efficiency in the running time of algorithms for lattice basis computation? There are mainly two different ways on how this can be done. First, one can simply count the number of arithmetic operation that the algorithm performs. In this model, one does not care about the size of the numbers and simply counts each basic ring operation: addition, subtraction, multiplication, and division. This concept of arithmetic complexity is often used in the context of matrix related problems (e.g.~\cite{DBLP:journals/iandc/Schnorr06, DBLP:conf/focs/BrandNS19, DBLP:conf/soda/ChepurkoCKW22}) and linear programming~(e.g.~\cite{DBLP:conf/soda/Brand20}), for example the concept of strong polynomiality relies on the notion of arithmetic complexity. 

A more precise measure of the running time of an algorithm uses the so called \emph{bit complexity} model. Here, one counts each bit operation and hence for example an addition of two numbers of size $t$ bits requires $O(t)$ bit operations.

In most algorithmic problems the arithmetic model and the bit complexity model do not need to be distinguished as the respective running times would essentially match.
However, this is not the case for lattice basis computation (and related problems). For example, intermediate numbers in computing the Hermite normal form can become exponentially large in the dimension compared to the input numbers. Therefore, the same algorithm might have an additional factor in the bit complexity model compared to the arithmetic complexity.


\subsubsection*{Arithmetic Complexity}
While the bit complexity model is more precise in terms of worst case complexity, we also study our algorithms within the notion of arithmetic complexity. The main advantage of this model is that it provides a relatively easy analysis of the running time. Also, as one is simply counting the number of elementary ring operations the model provides an easier understanding of the running time when generalizing to other algebraic structure. Historically however, it was often the case that in the end, the same running time in the bit complexity model could be achieved as in the model of arithmetic complexity. But for the bit complexity to achieve the same running time, typically a very thorough analysis on the bit level is necessary. Consider for example the classical Euclidean algorithm when applied to numbers of bit length $t$. The algorithm requires $O(t^2)$ many bit operations, while only $O(t)$ many arithmetic operations are necessary. Using rather sophisticated operations on the bit level however, Schönhage~\cite{DBLP:journals/acta/Schonhage71} developed an algorithm computing the gcd by using only $\OTilde(t)$ many bit operations. 

In terms of arithmetic complexity, our main result is to develop an algorithm which uses at most 
\[\OTilde(\log\det(B) \cdot (m-n)n + mn^{\omega -1}\log ||A||_\infty).\]
many arithmetic operations. Even with a worst case Hadamard bound for $\det(B) \leq (n ||A||_\infty)^n$ and bounding $(m-n) \leq m$, we obtain a running time of $\OTilde(mn^2 \log ||A||_\infty)$ and hence improve upon the algorithm of Storjohann and Labahn~\cite{DBLP:conf/issac/StorjohannL96} by a factor of $n^{\omega -2} \approx n^{0.37}$ for current values of $\omega$. We are not aware of any other algorithms with a better running time within the arithmetic complexity model. But note that the algorithm by Storjohann and Labahn has the same time complexity within the bit complexity model, while our algorithms perform slightly worse within the bit complexity model. However, we are confident that a sophisticated analysis on the bit level similar to the approach of Schönhage~\cite{DBLP:journals/acta/Schonhage71}, will provide a much better running time also in the bit complexity model. In this sense, we see our results within the arithmetic complexity model as the potential that the presented approach has. Recall that our approach is new and builds upon very few subroutines while competing with algorithms for the HNF which build upon decades of research across dozens of papers.


\subsubsection*{Bit Complexity}
When it comes to the bit complexity model, in general, one has to pay attention to the growth of intermediate numbers in the matrix and in the respective solutions of linear systems. In the case of computing the HNF, this problem is typically dealt with by applying a separate coefficient reduction algorithm. In case of our algorithm however, we can completely ignore this issue. We show that for an easy pivoting rule, we only have quadratically growing coefficients in our basis matrix $B$. As a result, we can improve upon the running time of the algorithm by Labahn and Storjohann~\cite{DBLP:conf/issac/StorjohannL96} in the case that $m-n$ is small. Our algorithm requires $\OTilde((m-n)n^3 \log^2 ||A||_\infty)$ bit operations and therefore yields an improved running time if the number of vectors that need to be merged into the basis is small, i.e. $m-n \in O(n^{\omega-2})$. In the case that $\det(B)$ is small, we also obtain an improved running time. For the general case, our algorithm matches the running time of \cite{DBLP:conf/issac/StorjohannL96} in terms of $m$ and $n$ having a bit complexity of 
\begin{align*}
    \OTilde(mn^\omega\log^2||A||_\infty).
\end{align*}
We are rather confident that the quadratic term in $\log||A||_\infty$ can be improved to a single logarithmic term by using an approach similar to Schönhage~\cite{DBLP:journals/acta/Schonhage71}. However, the required observations on the bit level would exceed the scope of this paper.

Furthermore, our algorithms can be easily modified to compute the determinant of a square matrix $B$ or compute a solution for a linear system of Diophantine equations. In the case of computing the determinant, the running times of the algorithms remain the same. However, in the case of computing a solution for Diophantine systems, the worst case complexity of the algorithms increase.





\section{Algorithm Sketch}\label{sec:algorithm_sketch}
In this section and throughout the paper, we assume that $rank(A) = n$ and therefore the lattice $\lattice(A)$ is full dimensional. However, our algorithms can be applied in a similar way if $rank(A) < n$. 
The term $(m-n)$ in the running times of the respective algorithm (which represents the number of vectors that need to be merged into the basis) is then replaced by the term $(m-rank(A))$.

\subsection*{Preliminaries}
Consider a lattice $\lattice(B)$ for a given full dimensional basis $B \in \zz^{n \times n}$. An important notion that we need is the so called \emph{fundamental parallelepiped}
\begin{align*}
    \Pi(B) = \{ B x \mid x \in [0,1)^n \}
\end{align*}
\begin{figure}
    \centering
    \includegraphics[width=0.2\textwidth]{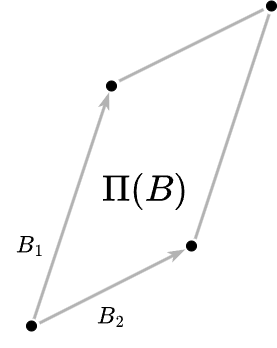}
    \caption{The parallelepiped of $B=(B_1B_2)$.}
    \label{fig:parallelepiped}
\end{figure}
see also \autoref{fig:parallelepiped}. As each point $a \in \mathbb{R}^n$ can be written as
\begin{align*}
    a = B \floor{x} + B \{ x \},
\end{align*}
it is easy to see that the space  $\mathbb{R}^n$ can be partitioned into parallelepipeds. Here, $\floor{x}$ denotes the vector, where each component $x_i$ is rounded down and $\{ x \} = x - \floor{x}$ is the vector with the respective fractional entries $x_i \in [0,1)$.
In fact, the notion of $\Pi(B)$ allows us to define a multi-dimensional modulo operation by mapping any point $a \in \zz^n$ to the respective \emph{residue vector} in the parallelepiped $\Pi(B)$, i.e. 
\begin{align*}
    a \pmod {\Pi(B)} := B \{ B^{-1} a \} \in \Pi(B).
\end{align*}
Furthermore, for $a \in \zz$, we denote with $\nint{z}$ the next integer from $a$, which is $\floor{a + 1/2}$. When we use these notations on a vector $a \in \zz^n$, the operation is performed entry-wise. 

Note that the parallelepiped $\Pi(B)$ has the nice property, that its volume as well as the number of contained integer points is exactly $\det(B)$, i.e. 
\begin{align*}
    vol(\Pi(B)) = |\Pi(B) \cap \zz^n| = \det(B).
\end{align*}

In our algorithm, we will change our basis over time by exchanging column vectors. We denote the exchange of column $i$ of a matrix $B$ with a vector $v$ by $B\setminus B_i \cup v$. The notation $B\cup v$ for a matrix $B$ and a vector $v$ of suitable dimension denotes the matrix, where $v$ is added as another column to matrix $B$. Similarly, the notation $B \cup S$ for a matrix $B$ and a set of vectors $S$ (with suitable dimension) adds the vectors of $S$ as new columns to matrix $B$. While the order of added columns is ambiguous, we will use this operation only in cases where the order of column vectors does not matter.

\subsection*{The Algorithm}
Given two numbers, the classical Euclidean algorithm, essentially consists of two operations. First, a \emph{modulo operation} computes the modulo of the larger number and the smaller number. Second, an \emph{exchange operation} discards the larger number and adds the remainder instead. The algorithm continues with the smaller number and the remainder. 

Given vectors $A=\{A_1, \ldots , A_{n+1}\} \subset \zz^n$, our generalized algorithm performs a multi-dimensional version of \emph{modulo} and \emph{exchange operations} of columns with the objective to compute a basis $B \in \zz^{n \times n}$ with $\lattice (B) = \lattice (A)$. First, we choose $n$ linearly independent vectors from $A$ which form a non-singular matrix $B$. The lattice  $\lattice(B)$ is a sub-lattice of $\lattice(A)$. Having this sub-basis, we can perform a division with residue in the lattice $\lattice (B)$. Hence, the remaining vector $a \in A \setminus B$ can be represented as
\begin{align*}
    a = B \floor{B^{-1}a} +r,
\end{align*}
where $r$ is the remainder $a \pmod{\Pi (B)}$, see also \autoref{fig:high_dimension_modulo}. 
\begin{figure}[t]
   \centering
      \subfloat[The modulo operation in dimension $2$.
      \label{fig:high_dimension_modulo}]{{\small
    \includegraphics[width=0.40\textwidth]{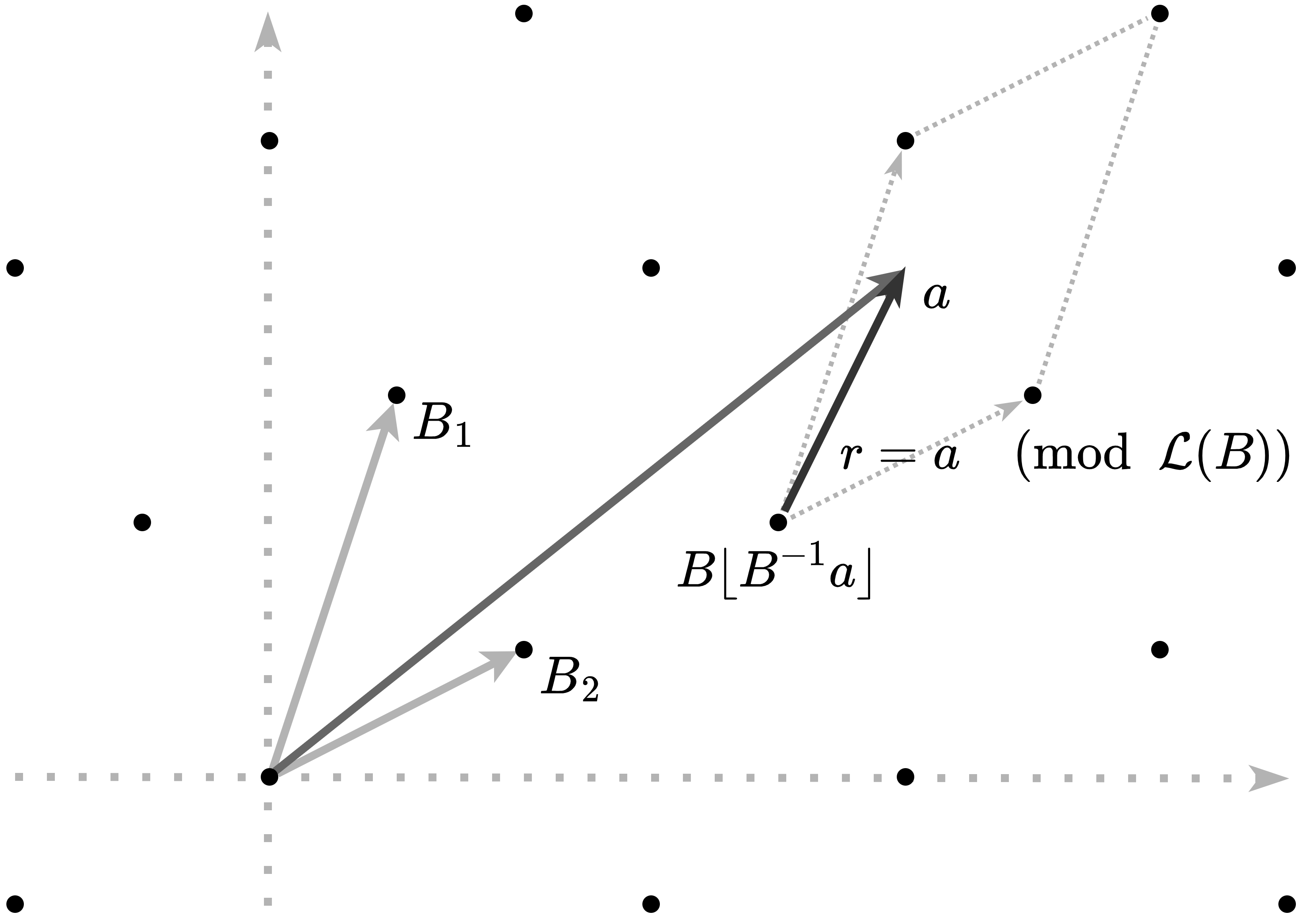}}}\qquad\qquad\qquad
      \subfloat[Exchange of a basis vector and the parallelopipeds for $B_1$ and $B_2$ (solid), $B_2$ and $r$ (dotted), and $B_2$ and $r'$ (dashed).
      \label{fig:basis_exchange}]{{\small
    \includegraphics[width=0.40\textwidth]{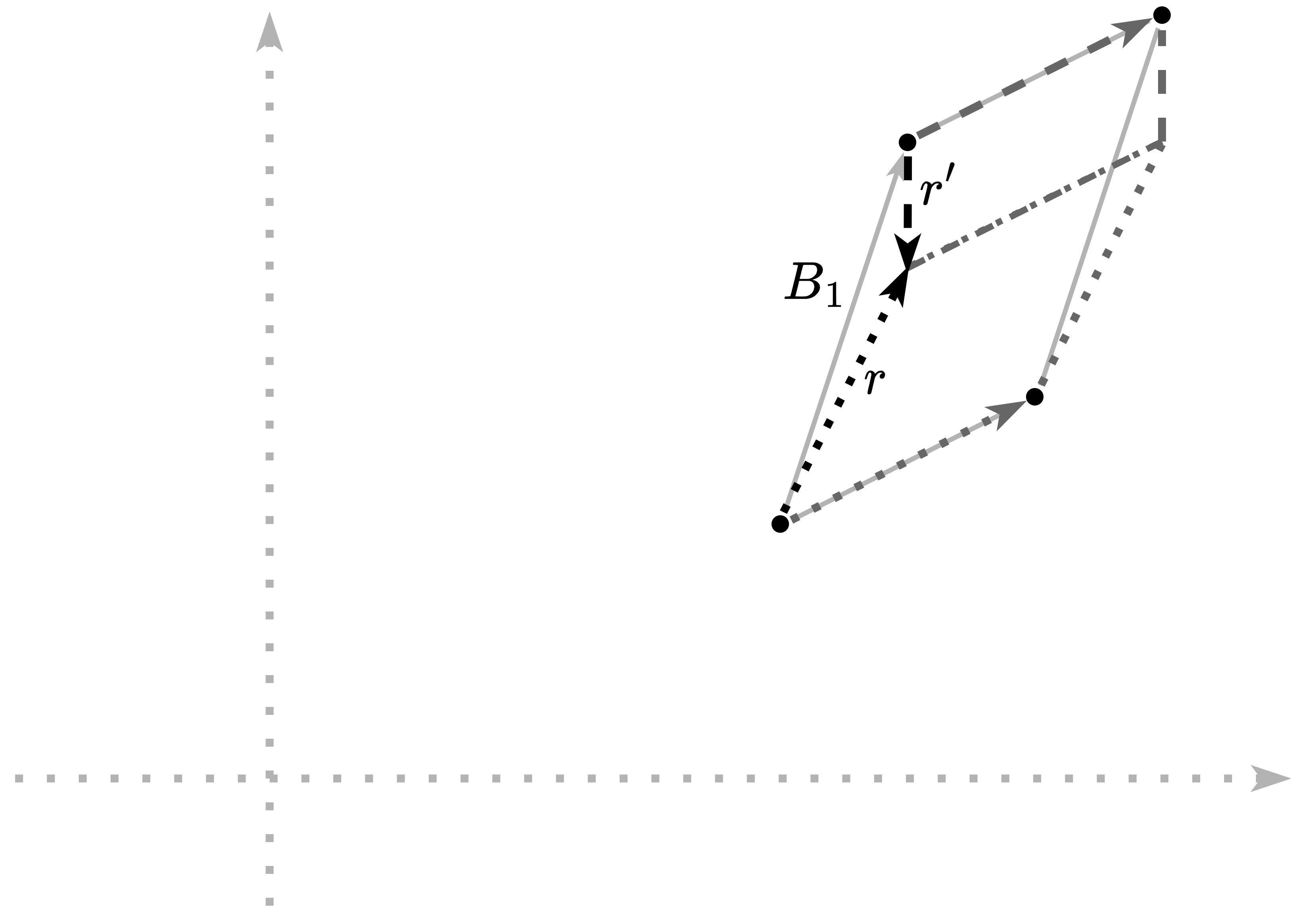}}}\\
    \caption{The modulo operation with respect to a lattice and the exchange operation depending on $\nint{x_1}$.}
\end{figure}
In dimension $n= 1$ this is just the classical division with residue and the corresponding modulo operation, \ie{} $a = b \cdot \floor{a/b} + r$.

Having the residue vector $r$ at hand, the \emph{exchange step} of our generalized version of the Euclidean algorithm exchanges a column vector of $B$ with the residue vector $r$. In dimension $>1$, we have the choice on which column vector to discard from $B$. The choice we make is based on the solution $x \in \qq^n$ of the linear system $Bx = a$.
\begin{itemize}
    \item Case 1: $x \in \zz^n$. In the case that the solution $x$ is integral, we know that $a \in \lattice (B)$ and hence $\lattice (B \cup a ) = \lattice (B)$. Our algorithm terminates.
    \item Case 2: There is a fractional component $i$ of $x$. In this case, our algorithm exchanges $B_i$ with $r$, \ie{} $B' = B \setminus  B_i  \cup  r $.
\end{itemize}
The algorithm iterates this procedure with basis $B'$ and vector $a = B_i$ until Case 1 is achieved.\medskip

\begin{addmargin}[3.5em]{3.5em}
\begin{tcolorbox}[
sharp corners=all,
colback=white,
colframe=black,
size=tight,
boxrule=0.2mm,
left=3mm,right=3mm,top=3mm,bottom=3mm
]
{\begin{multicols}{2}

\textbf{Euclidean Algorithm}\bigskip

\textsc{Modulo Operation}\\
$t = s\lfloor s^{-1} t\rfloor + r$ \medskip

\textsc{Exchange Operation}\\
$t = s, \ s=r$ \medskip

\textsc{Stop Condition}\\
$s^{-1} t\ $ is integral

\columnbreak

\textbf{Generalized Euclidean Algorithm}\bigskip

\textsc{Modulo Operation}\\
$a = B\floor{B^{-1}a} + r$ \medskip

\textsc{Exchange Operation}\\
$a = B_i, \ B_i := r$ \medskip

\textsc{Stop Condition}\\
$B^{-1}a\ $ is integral

\end{multicols}}
\end{tcolorbox}
\end{addmargin}

Two questions arise: Why is this algorithm correct and why does it terminate?

\textbf{Termination:}\\
The progress in step 2 can be measured in terms of the determinant. For $x$ with $Bx=a$ the exchange step in case 2 swaps $B_i$ with $r = B \{ x \}$ and $\{x_i\} \neq 0$ to obtain the new basis $B'$. By Cramer's rule we have that $\{x_i\} = \frac{\det B'}{\det B}$ and hence the determinant decreases by a factor of $\{x_i\} < 1$. The algorithm eventually terminates since $\det(\lattice(A))\geq 1$ and all involved determinants are integral since the corresponding matrices are integral. A trivial upper bound for the number of iterations is  the determinant of the initial basis. 

\textbf{Correctness:}\\
Correctness of the algorithm follows by the observation that $\lattice(B \cup a) = \lattice(B \cup r)$. To see this, it is sufficient to prove $a \in \lattice(B \cup r)$ and $r \in \lattice(B \cup a)$. By the definition of $r$ we get that $a = Bx = B\floor{x} + B\{x\} = B\floor{x} + r$. Hence, $a$ and $r$ are integral combinations of vectors from $B\cup r$ and $B\cup a$, respectively, and hence $\lattice(B\cup a) = \lattice(B \cup r)$.


The multiplicative improvement of the determinant in step 2 can be very close to $1$, \ie{} $\frac{\det(B)-1}{\det(B)}$. In the classical Euclidean algorithm a step considers the remainder $r$ for $a = b\floor{a/b} + r$. The variant described in \autoref{sec:intro} considers an $r'$ for $a = b\nint{a/b} + r'$. Taking the next integer instead of rounding down ensures that in every step the remainder in absolute value is at most half of the size of $b$. Our generalized Euclidean algorithm uses a modified modulo operation that does just that in a higher dimension. In our case, this modification ensures that the absolute value of the determinant decreases by a multiplicative factor of at most $1/2$ in every step as we explain below. The number of steps is thus bounded by $\log\det(B)$. The generalization to higher dimensions chooses $i$ such that $x_i$ is fractional and rounds it to the next integer $\nint{x_i}$ while the other entries of $x$ are again rounded $\floor{x_j}$ for $j\neq i$. Formally, this modulo variant is defined as 
\begin{align*}
    a \ \ (\bmod'\ \Pi(B)) := r' := a - (\sum_{j\neq i}B_j\floor{x_j} + B_i\nint{x_i})
\end{align*}
for $Bx=a$ and some $i$ such that $\{x_i\}\neq 0$. By Cramer's rule we get that the determinant decreases by a multiplicative value of at least $1/2$ in every iteration since $\frac{1}{2} \leq \abs{x_i - \nint{x_i}} = \abs{\frac{\det B'}{\det B}}$. In \autoref{fig:basis_exchange} the resulting basis for exchanging $B_1$ with $r = a \ \ (\bmod\  \Pi(B))$ and with $r'= a \ \ (\bmod'\  \Pi(B))$ shows that in both cases the volume of the parallelepiped decreases, which is equal to the determinant of the lattice. In \autoref{fig:application}, an example of our algorithm is shown.
\begin{figure}[t]
   \centering
      \subfloat[Application of our algorithm, $r'$ is the first remainder.
      \label{fig:alg_0}]{{\small
    \includegraphics[width=0.30\textwidth]{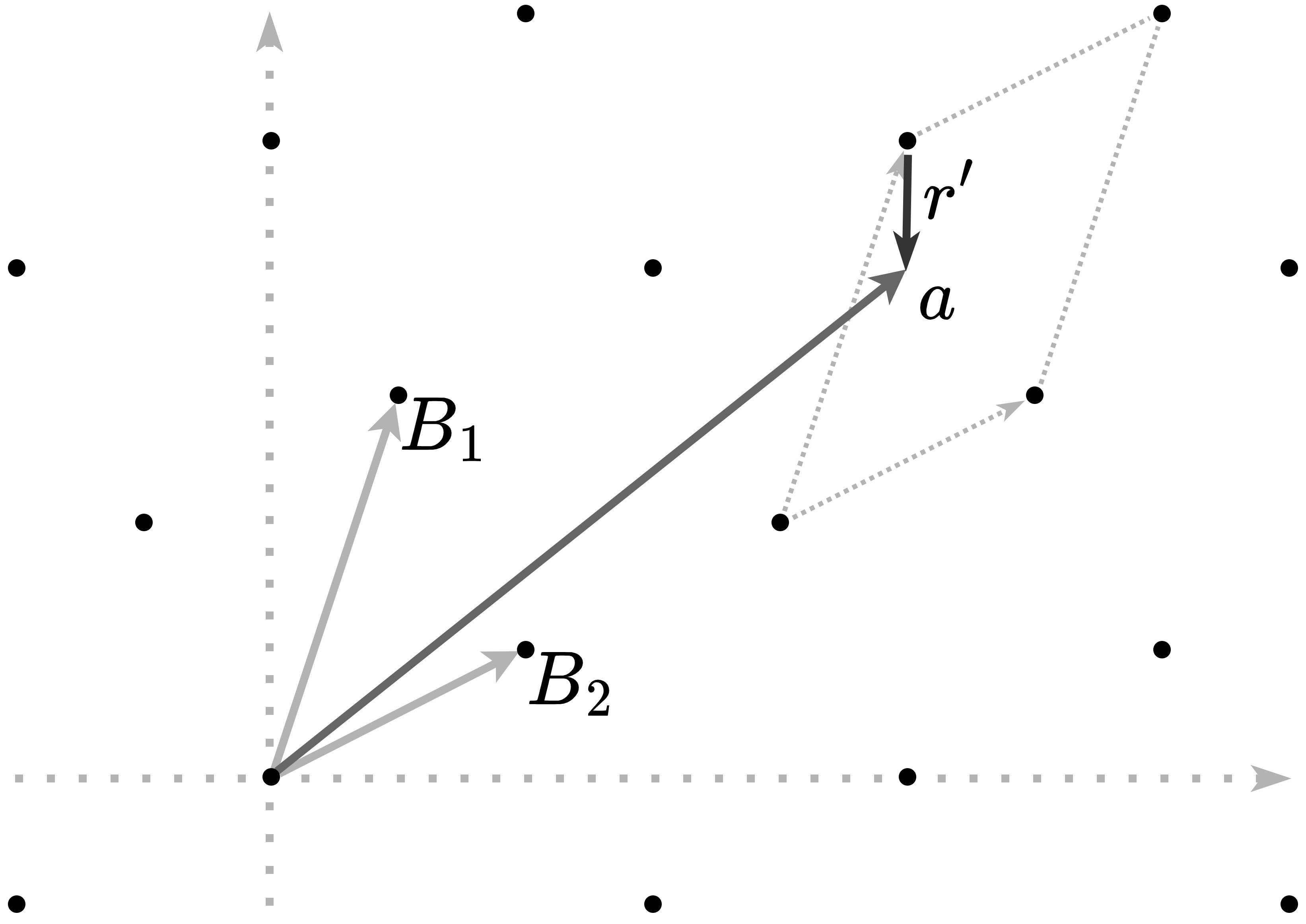}}} 
    \qquad
      \subfloat[Vectors $r'$ and $B_1$ were exchanged and $r''$ denotes the second remainder.
      \label{fig:alg_1}]{{\small
    \includegraphics[width=0.30\textwidth]{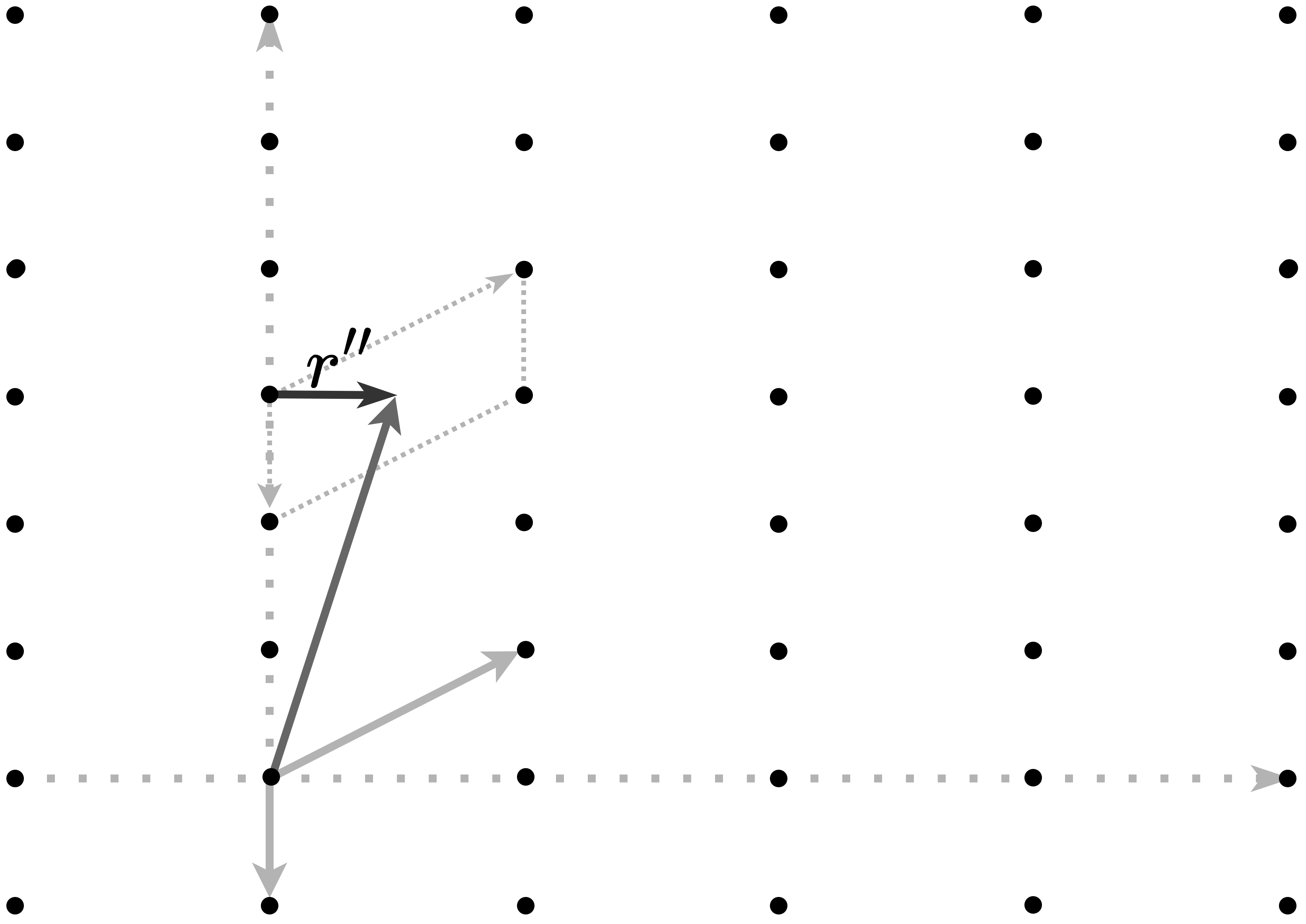}}}\qquad
      \subfloat[Vectors $r'$ and $B_2$ were exchanged. $B_2$ is in the lattice and the algorithm terminates.
      \label{fig:alg_2}]{{\small
    \includegraphics[width=0.30\textwidth]{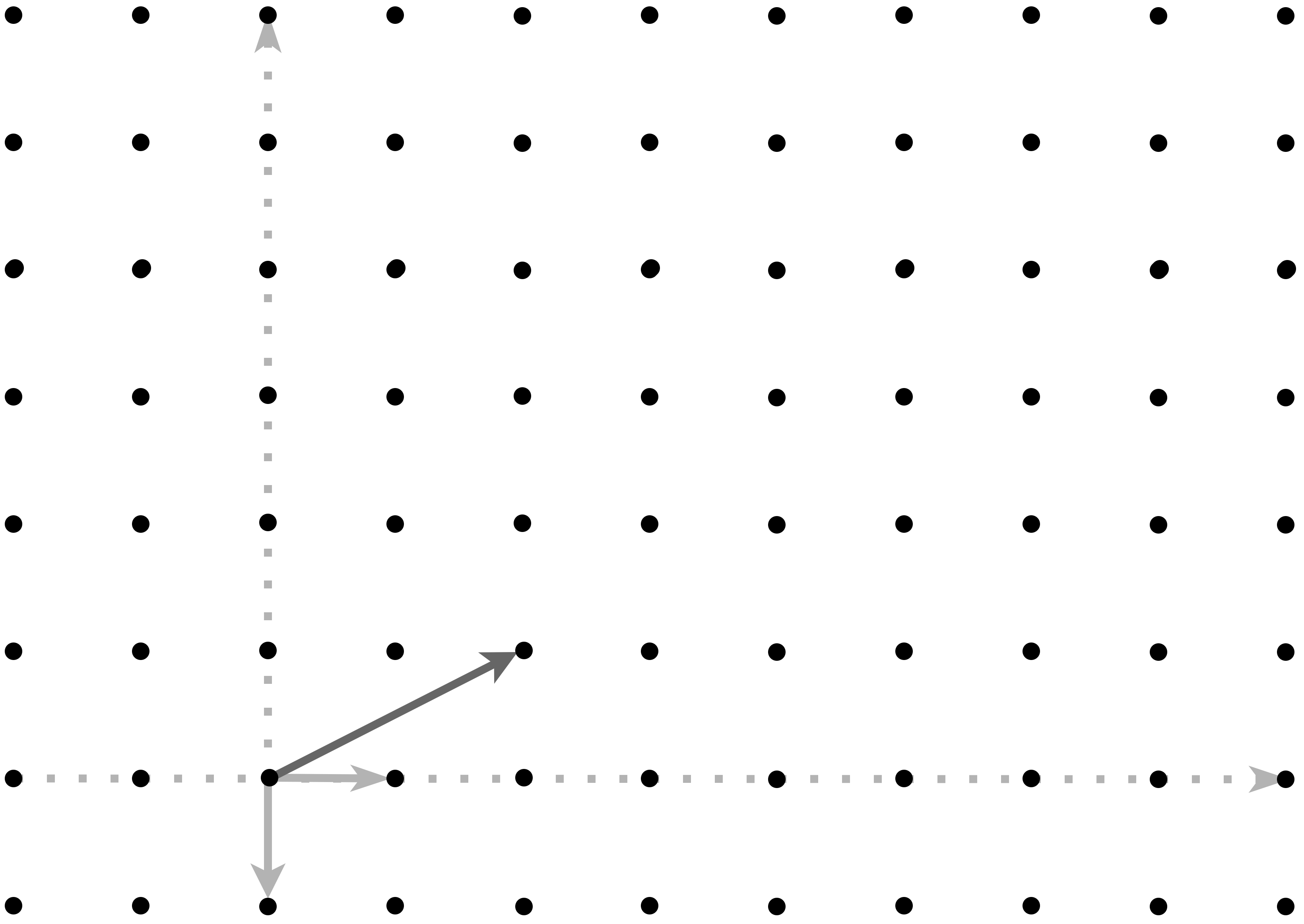}}}
   \caption{An application of the algorithm.}
    \label{fig:application}
\end{figure}


\subsection{Basic Algorithm}
In the following we state the previously described algorithm formally.
\begin{algorithm}
  \caption{Generalized Euclidean Algorithm (Basic Algorithm)
    \label{alg:easyversion}}
  \begin{algorithmic}[1]
    \Statex
    \textsc{Input: } A matrix $A=\left(A_1,\ldots, A_m \right) \in \zz^{n\times m}$ \\
    \textbf{find} independent vectors $B := \left(B_1, \ldots, B_n\right)$ with $B_i\in \{A_1,\ldots, A_m\}$
    \Let{$C$}{$\{A_1,\ldots, A_m\}\setminus \{B_1,\ldots, B_n\}$}
    \While{$C\neq \emptyset$}
      \State{\textbf{choose any} $c\in C$}
      \State{\textbf{solve} $Bx=c$ }
      \If{$x$ is integral}
      \Let{$C$}{$C\setminus\{c\}$}
      \Else
      \Let{$i$}{$\arg\min_{j\leq n, \,\{x_j\}\neq 0}\abs{x_j - \nint{x_j}}$}\Comment{Any $j$ s.t. $\{x_j\}\neq 0$ suffices.}
      \Let{$C$}{$C\setminus\{c\}\cup \{B_i\}$}
      \Let{$B_i$}{$c - (\sum_{j\neq i}B_j\floor{x_j}  + B_i \nint{x_i})$}
      
      \EndIf
    \EndWhile
    \State \Return{$B$}
  \end{algorithmic}
\end{algorithm}

\begin{theorem}\label{theorem:basic_version_correctness}
\autoref{alg:easyversion} computes a basis for the lattice $\mathcal{L}(A)$.
\end{theorem}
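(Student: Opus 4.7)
My plan is to prove correctness via a loop invariant and termination separately using the determinant of $B$. Throughout the while-loop I would maintain: (i) $B$ consists of $n$ linearly independent integer vectors, so $\det(B)$ is a nonzero integer; and (ii) $\lattice(B \cup C) = \lattice(A)$. Both hold initially by the construction in lines~2--3. Once the loop terminates with $C = \emptyset$, (ii) immediately gives $\lattice(B) = \lattice(A)$, and (i) confirms that $B$ is a lattice basis.

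\textbf{Maintaining the invariant.} In case~1, where $x = B^{-1}c$ is integral, we have $c \in \lattice(B)$, so removing $c$ from $C$ changes neither $\lattice(B \cup C)$ nor $B$. In case~2 let $r := c - \sum_{j \neq i} B_j \floor{x_j} - B_i \nint{x_i}$. Since $r$ is an integer combination of $B \cup \{c\}$ and, symmetrically, $c = r + \sum_{j \neq i} B_j \floor{x_j} + B_i \nint{x_i}$ is an integer combination of $B \cup \{r\}$, we obtain $\lattice(B \cup \{c\}) = \lattice(B \cup \{r\})$. Viewing $(B, C)$ as a multiset of generators, the update merely replaces $c$ by $r$ in $B \cup C$, so (ii) is preserved. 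The expansion of $r$ in the old basis $B$ equals $x - \floor{x} - (\nint{x_i} - \floor{x_i})e_i$, whose $i$-th component is $x_i - \nint{x_i}$; this is nonzero because the choice of $i$ requires $\{x_i\} \neq 0$, so substituting $r$ for $B_i$ preserves linear independence, giving (i).

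\textbf{Termination and conclusion.} Cramer's rule applied to the case~2 swap gives $\abs{\det B_{\mathrm{new}}} = \abs{x_i - \nint{x_i}} \cdot \abs{\det B} \leq \tfrac{1}{2}\abs{\det B}$, using the pivot rule from line~9. Together with invariant~(i), $\abs{\det B}$ is a positive integer that halves at every case~2 iteration, so at most $\log_2 \abs{\det B_0}$ such iterations can occur, where $B_0$ is the initial basis. Case~1 iterations strictly decrease $\abs{C}$ without altering $B$, while case~2 leaves $\abs{C}$ unchanged; hence the total iteration count is bounded by $\abs{C_0} + \log_2 \abs{\det B_0}$, which is finite. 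The main subtlety is the linear-independence check in case~2: it is not a priori obvious that swapping $B_i$ out preserves non-singularity of $B$, and the fact that $x_i - \nint{x_i} \neq 0$ whenever $\{x_i\} \neq 0$ is precisely what motivates the pivot rule. Everything else reduces to the generator identity $\lattice(B \cup \{c\}) = \lattice(B \cup \{r\})$ and a single application of Cramer's rule.
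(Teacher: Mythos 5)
Your proof is correct and takes essentially the same approach as the paper's: a loop invariant $\lattice(B\cup C) = \lattice(A)$ together with the exchange identity $\lattice(B \cup \{c\}) = \lattice(B \cup \{r\})$. You are in fact somewhat more thorough than the paper's own proof, which leaves the linear-independence of $B$ implicit and handles termination only in a separate (unproved) observation; one tiny remark is that the bound $\abs{x_i - \nint{x_i}} \leq \tfrac{1}{2}$ holds for \emph{any} fractional index $i$ since $\nint{\cdot}$ is nearest-integer rounding, so it does not actually rely on the $\arg\min$ pivot rule of line~9.
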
 
\begin{proof}
Let us consider the following invariant.\medskip

\textit{Claim.} In every iteration $\mathcal{L}(A) = \mathcal{L}(B\cup C)$.

By the definition of $B$ and $C$ the claim holds in line 2. We need to prove that removing $c$ from $C$ in line $7$ and altering $B$ and $C$ in lines 9-11 do not change the generated lattice. In line 7 we found $c$ is an integral combination of vectors in $B$. Thus, every lattice point can be represented without the use of $c$ and $c$ can be removed without altering the generated lattice. In lines 9 and 10 a vector $c$ is removed from $B\cup C$ and instead a vector $c' = c - (\sum_{j\neq i}B_j\floor{x_j}  + B_i \nint{x_i})$ is added. By the definition of $c'$, the removed vector $c$ is an integral combination of vectors $c', B_1, \ldots, B_n$ and $c'$ is an integral combination of vectors $c, B_1, \ldots, B_n$. Using the same argument as above, this does not change the generated lattice. \medskip

The algorithm terminates when $C=\emptyset$. In this case $B$ is a basis of $\mathcal{L}(A)$, since by the invariant we have that $\mathcal{L}(B) = \mathcal{L}(B \cup C) = \mathcal{L}(A)$.

\end{proof}

\begin{observation}
\autoref{alg:easyversion} terminates after at most $\log\det(B^{(1)})$ exchange steps.
\end{observation}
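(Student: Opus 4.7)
The plan is to argue that each exchange step (lines 9--11) decreases $|\det(B)|$ by a factor of at least $2$, and then to use the fact that $|\det(B)|$ remains a positive integer throughout the algorithm to bound the total number of such steps.

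First I would make the Cramer's-rule calculation precise. Writing $B x = c$, the new column inserted in place of $B_i$ is
\[
    r' \;=\; c - \sum_{j\neq i} B_j \lfloor x_j\rfloor \,-\, B_i \nint{x_i} \;=\; \sum_{j\neq i} B_j \{x_j\} \,+\, B_i\bigl(x_i - \nint{x_i}\bigr) \;=\; B y,
\]
where $y_j = \{x_j\}$ for $j\neq i$ and $y_i = x_i - \nint{x_i}$. By multilinearity of the determinant in the $i$-th column, all terms $B_k$ with $k\neq i$ vanish because they coincide with an existing column of $B'$, so
\[
    \det(B') \;=\; \bigl(x_i - \nint{x_i}\bigr)\cdot\det(B).
\]

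Next I would observe two things about the factor $x_i - \nint{x_i}$. Since the algorithm chooses $i$ with $\{x_i\}\neq 0$, the factor is nonzero, so $B'$ remains non-singular. Since $\nint{\cdot}$ rounds to the nearest integer, $|x_i - \nint{x_i}| \leq 1/2$. Combining these gives
\[
    0 \;<\; |\det(B')| \;\leq\; \tfrac{1}{2}\,|\det(B)|.
\]

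Finally I would close the argument by noting that $B$ is integral throughout the algorithm (the update in line 11 is an integer combination of integer vectors), so $|\det(B)|$ is always a positive integer, i.e.\ at least $1$. Starting from the initial basis $B^{(1)}$, after $k$ exchange steps we have $1 \leq |\det(B^{(k+1)})| \leq 2^{-k} |\det(B^{(1)})|$, which forces $k \leq \log_2 \det(B^{(1)})$. Since line 7 (the integral case) does not perform an exchange, it does not interfere with this count. There is no real obstacle here; the only care needed is the clean bookkeeping of the Cramer's-rule expansion and checking that the chosen pivot index $i$, although minimizing $|x_j - \nint{x_j}|$ among fractional components, still satisfies the universal bound $|x_i - \nint{x_i}|\leq 1/2$ that drives the halving.
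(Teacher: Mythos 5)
Your proposal is correct, and it takes essentially the same route as the paper: the paper states this observation without a displayed proof, but the supporting argument appears in the Algorithm Sketch section, where it invokes Cramer's rule to get $\abs{x_i - \nint{x_i}} = \abs{\det B' / \det B} \leq 1/2$ and then appeals to integrality of determinants. Your multilinearity expansion of $\det(B')$ is just the explicit derivation of that Cramer's-rule identity, and the rest of your bookkeeping (positivity of the integer determinant, exchange steps being the only ones that change $B$) fills in exactly the details the paper elides.
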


\subsection{Arbitrary Rank of Lattice}
In the case that the lattice $\mathcal{L}(A)$ is not fully dimensional Algorithm \ref{alg:easyversion} can easily be modified to also function in that case. This can be done by using Lemma~\ref{lem:maximal_subsystem} to choose a maximum set of linear independent vectors from $A$ as our initial basis $B$. The algorithm then proceeds to work with a basis $B$ containing $rank(B)$ many vectors. Note that every other vector in $C$ is then still contained in the linear subspace of $B$ and hence the linear system of equalities in step 5 of the algorithm is always solvable.

The same argument can be applied to any of the presented algorithms in this paper. For simplicity we therefore omit this case and assume from now on that $\mathcal{L}(A)$ is fully dimensional. As mentioned, the term $(m-n)$ in the running times of the respective algorithm (which represents the number of vectors that need to be merged into the basis) must be replaced by the term $(m-rank(A))$.

\section{Arithmetic Operations}

The main bottleneck in terms of running time of Algorithm~\ref{alg:easyversion} is that in each iteration, the linear system $Bx = c$ (line 5) needs to be solved. In this section, we present two efficient algorithms for lattice basis computation that do this step more efficiently. Algorithm~\ref{alg:inverseversion} uses the inverse matrix to obtain the respective solutions. As the basis $B$ changes, the inverse matrix is being updated. In Algorithm~\ref{alg:solutionupdateversion}, we use an efficient data structure that manages the solutions for all vectors that are not in the basis. The data structure is built in a way that it can be updated efficiently when the basis changes.

We analyze the algorithms with respect to their arithmetic complexity. A subproblem that arises is to find a maximal set of linearly independent vectors. In our algorithms we use the following Lemma for this subproblem.
\begin{lemma}[\cite{DBLP:conf/issac/LiS22}]\label{lem:maximal_subsystem}
Let $A\in\zz^{\Tilde{m}\times \Tilde{n}}$ have full column rank. There exists an algorithm that finds indices $i_1, \ldots, i_{\Tilde{n}}$ such that $A_{i_1}, \ldots, A_{i_{\Tilde{n}}}$ are linearly independent using $\OTilde(\Tilde{m}\Tilde{n}^{\omega-1}\log\norm{A}_\infty)$ bit operations. 
\end{lemma}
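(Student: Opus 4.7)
The plan is to reduce the problem to a rank profile computation over a finite field so that all arithmetic is performed on small-bit-size elements. Concretely, I would first choose a prime $p$ of bit-length $O(\log(\tilde{m}\tilde{n}\norm{A}_\infty))$ such that $\mathrm{rank}_{\mathbb{F}_p}(A) = \mathrm{rank}_{\mathbb{Z}}(A) = \tilde{n}$. Such a prime exists and can be sampled quickly: any nonzero $\tilde{n}\times \tilde{n}$ minor of $A$ has absolute value at most $(\tilde{n}\norm{A}_\infty)^{\tilde{n}}$ by Hadamard's inequality, so it has at most $\OTilde(\tilde{n}\log\norm{A}_\infty)$ distinct prime divisors. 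Drawing $p$ uniformly from a polynomially large interval of primes keeps at least one maximal-rank $\tilde{n}\times\tilde{n}$ minor nonzero modulo $p$ with high probability, which in turn ensures that the row rank profile over $\mathbb{F}_p$ coincides with the one over $\mathbb{Z}$.

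The core combinatorial step is then to compute the row rank profile of $A\bmod p$, that is, the lexicographically smallest set of $\tilde{n}$ row indices whose corresponding rows are linearly independent. I would do this recursively: split $A$ horizontally into a top block $A^{(1)}$ of $\tilde{m}/2$ rows and a bottom block $A^{(2)}$; recursively compute the rank profile of $A^{(1)}$, obtaining an independent row set $S_1$ of size $r_1\leq \tilde{n}$; form a rectangular elimination factor so that a single matrix product of shape $(\tilde{m}/2)\times r_1$ times $r_1\times \tilde{n}$ kills the $S_1$-subspace out of $A^{(2)}$; then recurse on the modified bottom block to extend $S_1$ to a rank profile of the full matrix. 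The recurrence $T(\tilde{m},\tilde{n})\leq 2T(\tilde{m}/2,\tilde{n}) + O(\tilde{m}\tilde{n}^{\omega-1})$ solves to $O(\tilde{m}\tilde{n}^{\omega-1})$ field operations, which is the target operation count.

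Each field operation in $\mathbb{F}_p$ costs $\OTilde(\log p)=\OTilde(\log(\tilde{m}\tilde{n}\norm{A}_\infty))$ bit operations using fast integer multiplication, and the initial reduction of $A$ modulo $p$ contributes $\OTilde(\tilde{m}\tilde{n}\log\norm{A}_\infty)$ bit operations. Combining these gives the claimed total of $\OTilde(\tilde{m}\tilde{n}^{\omega-1}\log\norm{A}_\infty)$.

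The main obstacle I expect is twofold. First, guaranteeing that the chosen prime truly preserves the rank profile, not merely the rank; I would handle this by a Monte Carlo-style verification (repeating with an independent prime and retrying on disagreement), which only inflates the running time by polylogarithmic factors. Second, in the recursive elimination one must take care that the indices returned refer to rows of the original $A$, since the modified bottom block no longer consists of rows of $A$; this is a bookkeeping issue but is the most error-prone part, particularly when the first recursive call returns $r_1<\tilde{n}$ and one must track which original rows were already committed before descending further.
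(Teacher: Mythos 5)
Your proposal is correct in substance but takes a genuinely different route from the one the paper invokes. The paper cites Li and Storjohann's result, whose proof (sketched in a commented-out appendix of this source) is \emph{deterministic}: it first computes a so-called \emph{2-massager} $(P,M,S)$ for $A$ and the associated massaged matrix $B = APMS^{-1}$, which is guaranteed to have full column rank modulo $2$; it then applies an LSP (Ibarra--Moran--Hui) decomposition over $\mathbb{Z}/(2)$ to read off the row rank profile, and observes that since $PMS^{-1}$ is nonsingular, independence of rows of $B$ gives independence of the corresponding rows of $A$. Your approach replaces the 2-massager machinery with a random prime $p$ chosen to preserve the rank modulo $p$, then runs essentially the same kind of block-recursive rank-profile computation over $\mathbb{F}_p$. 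What you gain is conceptual simplicity and avoiding the massager construction; what you lose is determinism --- your algorithm is Las Vegas (repeat with a fresh prime if the rank over $\mathbb{F}_p$ drops below $\tilde{n}$), whereas the cited result is deterministic, and some downstream running-time claims in the paper implicitly assume that.

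One small inaccuracy worth flagging: you assert that preserving a single maximal $\tilde{n}\times\tilde{n}$ minor modulo $p$ ensures the row rank profile over $\mathbb{F}_p$ coincides with the one over $\mathbb{Z}$. That is not true in general --- rank preservation and rank-profile preservation are different conditions, and the latter would require avoiding the prime divisors of all the leading minors along the profile. Fortunately, you do not actually need rank-profile agreement: any $\tilde{n}$ rows that are independent over $\mathbb{F}_p$ are automatically independent over $\mathbb{Z}$, and the lemma only asks for \emph{some} independent set, not the lexicographically smallest one. So the argument goes through once you drop the profile-agreement claim, but as written that sentence is a small logical overreach. Your bookkeeping worry about re-indexing rows through the recursion is real but standard and handled by the LSP/PLUQ literature you are implicitly re-deriving.
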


\subsection{Via Matrix Inverse Updates}
This first algorithm uses the fact that updating the inverse of a matrix and computing a matrix-vector multiplication both only requires $O(n^2)$ arithmetic operations. Thereby, we need to compute the inverse only once using $\OTilde(n^\omega)$ arithmetic operations and in every iteration we only require a quadratic number of operations for solving the linear system and updating the inverse.

\begin{algorithm}
  \caption{Generalized Euclidean Algorithm (Matrix Inverse Updates)
    \label{alg:inverseversion}}
  \begin{algorithmic}[1]
    \Statex
    \textsc{Input: } A matrix $A=\left(A_1,\ldots, A_m \right) \in \zz^{n\times m}$ \\
    \textbf{find} independent vectors $B := \left(B_1, \ldots, B_n\right)$ with $B_i\in \{A_1,\ldots, A_m\}$
    \State{\textbf{compute} the inverse $B^{-1}$}
    \Let{$C$}{$\{A_1,\ldots, A_m\}\setminus \{B_1,\ldots, B_n\}$}
    \While{$\det(\mathcal L(B))\neq 1$ \textbf{and} $C\neq \emptyset$}
      \State{\textbf{choose any} $c\in C$}
      \State{\textbf{compute} $x \leftarrow B^{-1}c$ }
      \If{$x$ integral}
      \Let{$C$}{$C\setminus\{c\}$}
      \Else
      \Let{$i$}{$\arg\min_{j\leq n, \,\{x_i\}\neq 0}\abs{x_i - \nint{x_i}}$}\Comment{Any $i$ s.t. $\{x_i\}\neq 0$ suffices.}
      \Let{$C$}{$C\setminus\{c\}\cup \{B_i\}$}
      \Let{$B_i$}{$c - (\sum_{j\neq i}B_j\floor{x_j}  + B_i \nint{x_i})$}
      \State{\textbf{update} inverse $B^{-1}$ according to new column $B_i$}
      \EndIf
    \EndWhile
    \State \Return{$B$}
  \end{algorithmic}
\end{algorithm}
\begin{theorem}
\autoref{alg:inverseversion} computes a basis for the lattice $\mathcal{L}(A)$ using \[\OTilde((m-n)n^2 + mn^{\omega-1}\log\norm{A}_{\infty}+ n^3\log\norm{A}_{\infty})\] arithmetic operations.
\end{theorem}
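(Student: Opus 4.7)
The plan is to reduce correctness to Theorem~\ref{theorem:basic_version_correctness} and then account for the arithmetic cost by bounding (i) the per-iteration work, (ii) the total number of iterations, and (iii) the initialization cost. For correctness, I would observe that \autoref{alg:inverseversion} performs exactly the same sequence of updates to $B$ and $C$ as \autoref{alg:easyversion}: the vector $x$ produced by $B^{-1}c$ satisfies $Bx = c$, so the chosen pivot index $i$ and the new column $c - (\sum_{j\ne i} B_j\lfloor x_j\rfloor + B_i\nint{x_i})$ are identical. The additional termination condition $\det(\mathcal L(B)) = 1$ is safe, because in that case $\mathcal L(B) = \zz^n \supseteq C$, so every remaining $c \in C$ is already in $\mathcal L(B)$ and could be peeled off by the integral branch. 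Hence Theorem~\ref{theorem:basic_version_correctness} carries over verbatim.

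For the running time, the main idea is that maintaining $B^{-1}$ turns each iteration into cheap linear-algebra work. Computing $x = B^{-1}c$ is a single matrix-vector product and takes $O(n^2)$ arithmetic operations. When a column exchange replaces $B_i$ by a new vector $v$, the basis undergoes a rank-one perturbation $B' = B + (v - B_i)e_i^{\top}$, so I would update $B^{-1}$ via the Sherman--Morrison formula
\[
(B')^{-1} = B^{-1} - \frac{B^{-1}(v - B_i)\,e_i^{\top} B^{-1}}{1 + e_i^{\top} B^{-1}(v - B_i)},
\]
which costs $O(n^2)$ operations (one matrix-vector product, one outer product, and one scalar division). Thus every iteration of the while loop is $O(n^2)$.

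Next I would bound the total number of iterations. Each iteration either enters the integral branch, removing one element from $C$ permanently and thus occurring at most $m - n$ times, or enters the exchange branch, which by the determinant halving argument from Section~\ref{sec:algorithm_sketch} occurs at most $\log\det(B^{(1)})$ times. By the Hadamard bound applied to the initial basis, $\log\det(B^{(1)}) \le n\log(n\norm{A}_\infty)$, so the exchange branch is executed at most $\OTilde(n\log\norm{A}_\infty)$ times. Multiplying by the per-iteration cost $O(n^2)$ gives $O((m-n)n^2) + \OTilde(n^3\log\norm{A}_\infty)$ for the loop.

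Finally, for initialization I would invoke Lemma~\ref{lem:maximal_subsystem} to select the $n$ independent columns forming $B$ at the claimed cost $\OTilde(mn^{\omega-1}\log\norm{A}_\infty)$, and compute $B^{-1}$ once in $O(n^\omega)$ arithmetic operations, which is absorbed into the $n^3\log\norm{A}_\infty$ term. Summing all contributions yields the advertised bound $\OTilde((m-n)n^2 + mn^{\omega-1}\log\norm{A}_\infty + n^3\log\norm{A}_\infty)$. The only genuinely non-routine point is the Sherman--Morrison update; everything else is a careful accounting of standard costs together with the iteration bound inherited from Section~\ref{sec:algorithm_sketch}.
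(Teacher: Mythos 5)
Your proposal is correct and follows essentially the same route as the paper: correctness is inherited from the basic-version theorem, the per-iteration cost is $O(n^2)$ via a matrix-vector product plus a Sherman--Morrison rank-one update of $B^{-1}$, the iteration count is $(m-n)$ discards plus $\log\det(B^{(1)}) \le n\log(n\norm{A}_\infty)$ exchanges by Hadamard, and initialization uses Lemma~\ref{lem:maximal_subsystem} plus one $O(n^\omega)$ inversion. Your remark that the extra stopping test $\det(\mathcal L(B))=1$ is harmless is a small, correct elaboration the paper leaves implicit.
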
 
\begin{proof}
Correctness of the algorithm follows similar to \autoref{theorem:basic_version_correctness}. Using \autoref{lem:maximal_subsystem} the set of linearly independent columns can be found in $\OTilde(mn^{\omega-1}\log\norm{A}_{\infty})$ bit operations. The inverse can be computed in $\OTilde(n^\omega)$. In every iteration either a vector from $C$ is discarded or an exchange operation is performed.  Thus, the number of iterations can be bounded by $m-n + \log(\det B^{(1)}) \leq m-n + n\log(n\norm{A}_{\infty})$, where $B^{(1)}$ is the matrix of linearly independent columns found in line 1 and the inequality follows the worst-case Hadamard bound on determinants. In every iteration a constant number of vector operations and matrix-vector multiplications is computed. Moreover, the inverse can be updated in $O(n^2)$ arithmetic operations, see e.g. Sherman and Morrison~\cite{sherman1949adjustment, sherman1950adjustment}. Therefore, the number of arithmetic operations used is bounded by $\OTilde((m-n)n^2 + mn^{\omega-1}\log\norm{A}_{\infty}+ n^3\log\norm{A}_{\infty}) \leq \OTilde(mn^2\log\norm{A}_{\infty})$.
\end{proof}

\subsection{Via System Solving}
The running time of the following algorithm improves on the previous one in the case that either $m-n$ or $\log(\det B^{(1)})$ is small. Instead of updating the inverse matrix in order to solve the next linear system, \autoref{alg:solutionupdateversion} computes all solutions at once and then updates the solution matrix. 

\begin{lemma}\label{lem:updated_solution_matrix}
Consider two matrices $B\in\zz^{\Tilde{n}\times \Tilde{n}}$ and $C\in\zz^{\Tilde{n}\times \Tilde{m}}$, where $B$ is full rank. Let $X:=B^{-1}C$ and consider an exchange step
\begin{align*}
    C' = C \setminus \{C_j\} \cup B_i, \qquad B'_i := C_j - \sum_{k\neq i}B_k\floor{X_{kj}} + B_i \nint{X_{ij}},
\end{align*}
where the $i$th column of $B$ is updated according to right-hand side $C_j$.
Then the updated solution matrix $X' := (B')^{-1} C'$ can be computed by
\begin{align*}
    X'_{ij} &= \frac{1}{X_{ij} - \nint{X_{ij}}}\\
    X'_{i\ell} &=  \frac{X_{i\ell}}{X_{ij} - \nint{X_{ij}}} \quad \text{for all } \ell\neq j\\
    X'_{kj} &=  \frac{-\{X_{kj}\}}{X_{ij} - \nint{X_{ij}}} \quad \text{for all } k\neq i\\
    X'_{k\ell} &= X_{k\ell} -  \frac{X_{i\ell} \cdot \{X_{kj}\}}{X_{ij} - \nint{X_{ij}}}  \quad \text{for all } \ell\neq j \text{ and } k\neq i.
\end{align*}
\end{lemma}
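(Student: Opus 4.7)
The plan is to express both the new basis $B'$ and the new right-hand side matrix $C'$ in terms of the old basis $B$, so that the updated solution matrix $X'=(B')^{-1}C'$ reduces to a short sequence of matrix operations that can be read off coordinate-wise.

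First I would rewrite $B'_i$ in the coordinates of $B$. Since $C_j = BX_{\cdot j} = \sum_k B_k X_{kj}$, the definition of $B'_i$ gives
\begin{equation*}
B'_i \;=\; \sum_{k\neq i} B_k \bigl(X_{kj}-\floor{X_{kj}}\bigr) \;+\; B_i\bigl(X_{ij}-\nint{X_{ij}}\bigr) \;=\; B Y,
\end{equation*}
where $Y\in\qq^{\tilde n}$ has $Y_k=\{X_{kj}\}$ for $k\neq i$ and $Y_i = \alpha := X_{ij}-\nint{X_{ij}}$. Consequently $B'=BE$, where $E$ is the identity matrix with its $i$th column replaced by $Y$. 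Similarly, since $B_i = B e_i$, the matrix $C' = C\setminus\{C_j\}\cup B_i$ satisfies $B^{-1}C' = \tilde X$, where $\tilde X$ agrees with $X$ in every column except column $j$, which is $e_i$. Putting these together,
\begin{equation*}
X' \;=\; (B')^{-1}C' \;=\; E^{-1}\tilde X.
\end{equation*}

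Next I would compute $E^{-1}$ explicitly. Because $E$ is the identity with only column $i$ altered (and $E_{ii}=\alpha\neq 0$ by the assumption that $\{X_{ij}\}\neq 0$), a direct verification shows that $E^{-1}$ is the identity except that its $i$th column is $\tfrac{1}{\alpha}(e_i - Y + Y_i e_i)$; concretely, $(E^{-1})_{ii}=1/\alpha$, $(E^{-1})_{ki}=-Y_k/\alpha$ for $k\neq i$, $(E^{-1})_{kk}=1$ for $k\neq i$, and all other entries vanish. This can be confirmed in one line by multiplying $E^{-1}E$ and checking that the $i$th column equals $e_i$.

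Finally I would evaluate $X'_{k\ell}=(E^{-1}\tilde X)_{k\ell}$ in four cases depending on whether $k=i$ and whether $\ell=j$. For $\ell=j$ the $\ell$th column of $\tilde X$ is $e_i$, so $X'_{ij}=(E^{-1})_{ii}=1/\alpha$ and $X'_{kj}=(E^{-1})_{ki}=-\{X_{kj}\}/\alpha$ for $k\neq i$. For $\ell\neq j$, only the $i$th and $k$th rows of $E^{-1}$ contribute nontrivially to $(E^{-1}X)_{k\ell}$, yielding $X'_{i\ell}=X_{i\ell}/\alpha$ and $X'_{k\ell}=X_{k\ell}-\{X_{kj}\}X_{i\ell}/\alpha$. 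These are exactly the four formulas claimed in the lemma.

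The only real obstacle is the bookkeeping with indices and signs; everything else is a direct computation once $B'=BE$ and $B^{-1}C'=\tilde X$ are observed. No delicate step is required because $\alpha\neq 0$ guarantees that $E$ is invertible, so the formulas are well defined whenever the exchange step is performed.
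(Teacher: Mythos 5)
Your proof is correct and is essentially the paper's argument recast in matrix form: the paper also begins by rewriting $B'_i = \sum_{k\neq i}B_k\{X_{kj}\} + B_i(X_{ij}-\nint{X_{ij}})$ (your $B'=BE$), then solves this identity for $B_i$ in terms of the columns of $B'$ (which is exactly your explicit $E^{-1}$), and substitutes into $C_\ell = BX_{*\ell}$ to read off the four formulas. Your packaging via the elementary column matrix $E$ and the single line $X' = E^{-1}\tilde X$ is slightly tidier than the paper's direct substitution, but the underlying computation is the same.
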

\begin{proof}
Since $C_j = BX_{*j}$, we can reformulate the exchange step as $B'_i := \sum_{k\neq i}B_k\{X_{kj}\} + B_i (X_{ij} - \nint{X_{ij}})$. As $B'_k = B_k$ is unchanged for $k\neq i$ we get that
\begin{align}\label{eq:old_Bi_in_terms_of_new_basis}
    B_i = B'_i\frac{1}{X_{ij} - \nint{X_{ij}}} + \sum_{k\neq i}B'_k\frac{-\{X_{kj}\}}{X_{ij} - \nint{X_{ij}}}.
\end{align}
This shows $B'X'_{*j} = C'_j = B_i$. For columns $\ell\neq j$ we get that 
\begin{align*}
    C'_\ell = C_\ell 
    &= BX_{*\ell} \\
    &= \sum_{k=1}^n B_k X_{k\ell} \\
    &= \sum_{k\neq i}B_k X_{k\ell} + B_i X_{i\ell} \\
    &\stackrel{(\ref{eq:old_Bi_in_terms_of_new_basis})}{=} \sum_{k\neq i}B'_k X_{k\ell} + \left( B'_i\frac{1}{X_{ij} - \nint{X_{ij}}} + \sum_{k\neq i}B'_k\frac{-\{X_{kj}\}}{X_{ij} - \nint{X_{ij}}} \right) \cdot X_{i\ell} \\
    &= \sum_{k\neq i}B'_k\cdot (X_{k\ell} - \frac{X_{i\ell}\{X_{kj}\}}{X_{ij} - \nint{X_{ij}}}) + B'_i\frac{X_{i\ell}}{X_{ij} - \nint{X_{ij}}} \\
    &= B'X'_{*\ell}.
\end{align*}
\end{proof}

For our target running time, we require a second adjustment. 
The exchange operation for updating $B_i$ after an exchange step requires $O(n^2)$ arithmetic operations. In order to reduce the number of arithmetic operations in \autoref{alg:solutionupdateversion}, we will delay updating the basis. Instead we will collect the representation of all exchange steps in a matrix $Y$, which is multiplied to the initial basis before output. 

\begin{algorithm}
  \caption{Generalized Euclidean Algorithm (Solution Updates)
    \label{alg:solutionupdateversion}}
  \begin{algorithmic}[1]
    \Statex{
    \textsc{Input: } A matrix $A=\left(A_1,\ldots, A_m \right) \in \zz^{n\times m}$ }
    \State{\textbf{find} independent vectors $B := \left(B_1, \ldots, B_n\right)$ with $B_i\in \{A_1,\ldots, A_m\}$}
    \State{\textbf{let} $C$ be a matrix with columns $\{A_1,\ldots, A_m\}\setminus \{B_1,\ldots, B_n\}$}
    \State{\textbf{compute} the inverse $B^{-1}$}
    \State{\textbf{compute} $X\leftarrow B^{-1}C$}
    \Let{$Y$}{$I_n$} \Comment{Invariant $B^{(\ell +1)} = B^{(1)}Y^{(\ell)}$}
    \While{$X$ not integral}
      \State{\textbf{choose minimal} $i\leq n$ and \textbf{any} $j\leq m-n$ s.t. $X_{ij}$ not integral}
      \State{$v_i \leftarrow X_{ij} - \nint{X_{ij}}$ and $v_k \leftarrow \{X_{kj}\}$ for all $k\neq i$}
      \Let{$Y$}{$Y \cdot \left(e_1, \ldots, e_{i-1}, v, e_{i+1}, \ldots, e_n \right)$} 
      \State{\textbf{update} X}
    \EndWhile
    \State \Return{$BY$}
  \end{algorithmic}
\end{algorithm}

\begin{theorem}\label{theo:solutionupdateversion}
\autoref{alg:solutionupdateversion} computes a basis for the lattice $\mathcal{L}(A)$ using \[\OTilde(\log\det(B^{(1)}) \cdot (m-n)n + mn^{\omega -1}\log\norm{A}_{\infty}) \]
arithmetic operations for an initial linearly independent subsystem $B^{(1)}$ found in line 1.
With the worst-case Hadamard bound on the determinant, the arithmetic complexity is
\[\OTilde((m-n)n^2\log\norm{A}_{\infty} + mn^{\omega -1}\log\norm{A}_{\infty}).\]
\end{theorem}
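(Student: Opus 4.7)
The plan is to reduce the correctness of \autoref{alg:solutionupdateversion} to that of \autoref{alg:easyversion} through a loop invariant linking $(B^{(1)}, Y, X)$, then apply the determinant-halving argument for termination, and finally cost the preprocessing, the loop, and the final matrix product.

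First, I would establish the invariants $B^{(\ell)} = B^{(1)} Y^{(\ell-1)}$ and $X^{(\ell)} = (B^{(\ell)})^{-1} C^{(\ell)}$ by induction on $\ell$. The base case follows from lines 3--5 with $Y^{(0)} = I_n$. For the inductive step, note that right-multiplying $B^{(\ell)}$ by the sparse matrix $T = (e_1, \ldots, e_{i-1}, v, e_{i+1}, \ldots, e_n)$ on line 9 replaces only the $i$-th column of $B^{(\ell)}$ by $B^{(\ell)} v = \sum_{k \neq i} B_k^{(\ell)} \{X_{kj}^{(\ell)}\} + B_i^{(\ell)}(X_{ij}^{(\ell)} - \nint{X_{ij}^{(\ell)}})$. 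Using $C_j^{(\ell)} = B^{(\ell)} X^{(\ell)}_{*j}$, this is exactly the exchange vector $c - \sum_{k \neq i} B_k \floor{X_{kj}} - B_i \nint{X_{ij}}$ of \autoref{alg:easyversion}. Hence the $Y$-updates of \autoref{alg:solutionupdateversion} emulate the basis updates of \autoref{alg:easyversion}, while \autoref{lem:updated_solution_matrix} guarantees the matching $X$-update. Correctness of the output $BY$ as a basis of $\mathcal{L}(A)$ then follows from \autoref{theorem:basic_version_correctness}.

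Second, termination proceeds by the same argument as for \autoref{alg:easyversion}. By Cramer's rule, $|X_{ij}^{(\ell)} - \nint{X_{ij}^{(\ell)}}| = |\det B^{(\ell+1)}| / |\det B^{(\ell)}|$, and choosing $i$ to minimise $|X_{ij} - \nint{X_{ij}}|$ among fractional entries forces this ratio to be at most $1/2$. Since all determinants are positive integers, the loop runs at most $\log\det(B^{(1)})$ times.

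For the arithmetic complexity, line 1 uses $\OTilde(mn^{\omega-1}\log\norm{A}_\infty)$ by \autoref{lem:maximal_subsystem}, line 3 uses $\OTilde(n^\omega)$, and line 4 computes $B^{-1}C$ via fast rectangular matrix multiplication in $\OTilde(mn^{\omega-1})$ operations. Inside the loop, locating a non-integral $X_{ij}$ and applying the update formulas of \autoref{lem:updated_solution_matrix} each cost $O(n(m-n))$, and updating $Y$ through the single matrix-vector product $Yv$ costs $O(n^2)$. Multiplying by the iteration count yields $\OTilde(\log\det(B^{(1)}) \cdot ((m-n)n + n^2))$, and the final product $BY$ contributes another $\OTilde(n^\omega)$. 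Substituting the Hadamard bound $\log\det(B^{(1)}) \leq n\log(n\norm{A}_\infty)$ then gives the second form.

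The main technical difficulty lies in absorbing the $O(n^2\log\det(B^{(1)}))$ contribution from the $Y$-updates into the claimed bound, which does not explicitly feature an $n^2\log\det$ term. I would address this by batching the sparse matrices $T^{(\ell)}$ and using fast matrix multiplication to amortise the per-iteration $Y$-update to $\OTilde(n^{\omega-1})$, so its total contribution $\OTilde(n^{\omega-1}\log\det(B^{(1)}))$ is dominated by the preprocessing term $\OTilde(mn^{\omega-1}\log\norm{A}_\infty)$ under the standing assumption $m \geq n$; alternatively one may delay materialisation of $Y$ altogether and compute $B \prod_\ell T^{(\ell)}$ at the end using a balanced product tree.
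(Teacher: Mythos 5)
Your high-level structure matches the paper's proof: establish the invariant $B^{(\ell+1)}=B^{(1)}Y^{(\ell)}$, deduce correctness from \autoref{theorem:basic_version_correctness} and \autoref{lem:updated_solution_matrix}, bound the number of iterations by $\log\det(B^{(1)})$, and cost the preprocessing, the per-iteration work, and the final product $BY$. One small slip: the algorithm selects the \emph{minimal row index} $i$ (plus any $j$), not the $i$ minimising $\abs{X_{ij}-\nint{X_{ij}}}$; the bound $\abs{\det B^{(\ell+1)}/\det B^{(\ell)}}\le 1/2$ nevertheless holds for any fractional pivot, since the distance from a non-integer to the nearest integer is always at most $1/2$, so termination is unaffected.

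The genuinely different part is how you handle the $Y$-update. You correctly flag that a naive $Y\leftarrow Yv$ costs $O(n^2)$ per iteration and that $O(n^2\log\det(B^{(1)}))$ would exceed the claimed budget, and you propose to fix this by batching the sparse update matrices $T^{(\ell)}$ and using (rectangular) fast matrix multiplication, or by deferring $Y$ altogether and computing $B\prod_\ell T^{(\ell)}$ via a balanced product tree. This does work out: a product of $k$ such single-column updates has at most $\min(k,n)$ non-identity columns, so with rectangular multiplication the balanced tree totals $\OTilde(\log\det(B^{(1)})\cdot n^{\omega-1})$, which is absorbed by the preprocessing term $\OTilde(mn^{\omega-1}\log\norm{A}_\infty)$ since $\log\det(B^{(1)})\le n\log(n\norm{A}_\infty)$ and $m\ge n$. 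The paper instead makes a structural observation that renders the batching unnecessary: because $i$ is chosen minimally and, by \autoref{lem:updated_solution_matrix}, integral rows of $X$ remain integral, one has $v_{i'}=0$ for $i'<i$ and $Y_k=e_k$ for all $k>i$ at that point, so $Y'_i=Yv=Y_i v_i+\sum_{k>i}e_k v_k$ is computable in $O(n)$ arithmetic operations. That is precisely what the minimal-$i$ pivoting rule buys; your route is correct but does not need the pivoting rule, at the cost of a more elaborate (and $\omega$-dependent) argument for the $Y$-side of the ledger.
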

\begin{proof}
In order to prove correctness of the algorithm it suffices to show that the invariant $B^{(\ell+1)} = B^{(1)}Y^{(\ell)}$ holds, where $B^{(\ell)}, Y^{(\ell)}$, and $X^{(\ell)}$ represent the matrices $B$, $Y$, and $X$ starting iteration $\ell$, respectively. The exchange step in one iteration is $B^{(\ell+1)}_i = \sum_{k\neq i}B^{(\ell)}_k\{X^{(\ell)}_{kj}\} +  B^{(\ell)}_i (X^{(\ell)}_{ij}-\nint{X^{(\ell)}_{ij}})$ or in terms of the entire matrix it is
\begin{align*}
    B^{(\ell+1)} 
    &= B^{(\ell)} \cdot \left( e_1, \ldots, e_{i-1}, v^{(\ell)}, e_{i+1}, \ldots, e_n  \right) \\
    &= B^{(1)} \cdot \left( e_1, \ldots, e_{i-1}, v^{(1)}, e_{i+1}, \ldots, e_n  \right) \cdot \ldots \cdot \left( e_1, \ldots, e_{i-1}, v^{(\ell)}, e_{i+1}, \ldots, e_n  \right)\\
    &= B^{(1)}Y^{(\ell)}.
\end{align*}

Now, correctness of \autoref{alg:solutionupdateversion} follows similar to the proof of \autoref{theorem:basic_version_correctness} since $B$ and $C$ are updated just as in \autoref{alg:easyversion} and instead of computing a new solution in each iteration the complete solution matrix is updated in each iteration using \autoref{lem:updated_solution_matrix}. 

We find the set of linearly independent columns in time $\OTilde(mn^{\omega-1}\log\norm{A}_{\infty})$ using \autoref{lem:maximal_subsystem}.
The inverse and the matrix multiplication in lines 3 and 4 are computed in $\OTilde(n^\omega)$ and $ \OTilde(mn^{\omega -1})$, respectively. The number of iterations is bounded by $\log\det(B^{(1)})$ since in every iteration an exchange step is computed. Computing the vector $v$ requires $O(n)$ arithmetic operations. For $i' < i$ we have that $v_{i'} = \{X_{i'j}\} = 0$ since $i$ was chosen minimal considering fractional components of $X$. Computing $Y' \leftarrow Y \cdot \left(e_1, \ldots, e_{i-1}, v, e_{i+1}, \ldots, e_n \right)$ requires to compute 
\[Y'_i = Yv = \sum_{k \geq i}Y_k v_k.\]
A direct consequence of \autoref{lem:updated_solution_matrix} is that any integral row of the solution matrix $X_{kj}$ remains integral after the exchange step. Thus $Y_k = e_k$ for any $k>i$ and the computation simplifies to $Y'_i = Y_i v_i + \sum_{k \geq i}e_k v_k$ which can be computed with $O(n)$ arithmetic operations.  
In each iteration the main complexity is to update the $(m-n)n$ entries of $X$. Finally, in line 11 another matrix multiplication is performed in $\OTilde(n^\omega)$ arithmetic operations. The total running time is 
\[\OTilde(\log\det(B^{(1)}) \cdot (m-n)n + mn^{\omega -1}\log\norm{A}_{\infty}) \leq \OTilde((m-n)n^2\log\norm{A}_{\infty} + mn^{\omega -1}\log\norm{A}_{\infty}).\]
\end{proof}

\section{Bit Complexity}

A typical obstacle for computing the basis of a lattice is intermediate coefficient growth. Earlier algorithms for the HNF, for example, had their main computational bit complexity coming from intermediate numbers of length $\OTilde(n^4\log\norm{A}_\infty)$~\cite{DBLP:journals/siamcomp/KannanB79}. Later, all numbers involved could be bounded by $\det B \leq (n\norm{A}_\infty)^n$ for some subsystem $B$ of $A$, which still adds a factor of $n$. 


Large intermediate numbers could effect the bit complexity of our algorithm in two aspects: growing coefficients in the computed basis and exact solutions to linear systems. A naive implementation of our algorithmic idea could result in a basis with entries of exponential size. In every iteration, the new basis vector could be as large as the sum of the current basis vectors
\[\norm{B'_i}_\infty = \norm{\sum_{j\neq i}B_j\{x_j\} - B_i(x_i -\nint{x_i})}_\infty \leq  \sum_{j\leq n}\norm{B_j}_\infty \leq n\norm{B}_\infty.\]
If the initial basis is $B^{(1)}$, then there are up to $\log(\det B^{(1)})$ exchange steps. By Hadamard's bound coefficients in the basis might grow to be of order $(n\norm{A}_\infty)^{n}$ in a naive implementation. 

Fortunately, there is an easy pivoting rule that bounds the size of the computed basis $B$ by $\norm{B}_\infty \leq \OTilde(n^2\norm{A}_\infty)$. 
Our pivoting rule is very simple and in fact \autoref{alg:solutionupdateversion} already applies it. Instead of choosing any vector $c\in C$ and any fractional component of $x := B^{-1}c$, we compute exchange steps to obtain integral entries in the solution matrix $X:= B^{-1}C$ \emph{row by row}. If a row of the solution matrix is integral, then as a consequence of \autoref{lem:updated_solution_matrix} it remains integral after an exchange step. Moreover, in the modulo operation, basis vectors with integral solution component do not contribute to the new basis vector. If we assume that rows $i'<i$ of the solution matrix are integral we get that
\begin{align}\label{eq:bounded_basis_modulo_step}
    B'_i = \sum_{j\neq i}B_j\{x_j\} - B_i(x_i -\nint{x_i}) = \sum_{j> i}B_j\{x_j\} - B_i(x_i -\nint{x_i}).
\end{align}
Performing modulo and exchange steps row by row in the solution matrix corresponds to column by column in the current basis. Therefore, the basis vectors $B_j$ with $j<i$ are final in the sense that those will appear in the output basis and the basis vectors $B_j$ with $j>i$ are untouched in the sense that they were part of the input vectors which implies that their size is bounded by $\norm{B_j}_\infty \leq \norm{A}_\infty$. By Equation~\ref{eq:bounded_basis_modulo_step} only the  untouched basis vectors with the before mentioned size bound and the currently updated basis vector contribute to the new basis vector. Therefore, the size of the modulo vector is bounded by 
\[\norm{B'_i}_\infty \leq  (n-1)\norm{A}_\infty + \norm{B_i} \leq n^2\norm{A}_\infty\log(n\norm{A}_\infty)\] 
since there are at most $\log(\det B^{(1)}) \leq n\log(n\norm{A}_\infty)$ exchange steps.

Using this pivoting rule, large numbers may only appear as a result of exact system solving.  By Cramer's rule and Hadamard's bound exact solutions to a linear system $Bx = b$ can be as large as $n^{n/2}\norm{B}_{\infty}^{n-1}\norm{b}_\infty$ in the numerator and $n^{n/2}\norm{B}_{\infty}^n$ in the denominator. We use the recent algorithm by Birmpilis, Labahn and Storjohann to compute solutions of linear systems.

\begin{theorem}[\cite{DBLP:conf/issac/BirmpilisLS19,DBLP:conf/issac/LiS22}]\label{theo:system_solving}
There exists an algorithm that takes as input a non-singular matrix $B\in \zz^{\Tilde{n}\times \Tilde{n}}$ and a vector $b\in\zz^{\Tilde{n}}$ and returns as output $B^{-1}b \in \qq^{\Tilde{n}}$. If $\log\norm{b}_\infty \in \OTilde(\Tilde{n}\log\norm{B}_\infty)$, the running time of the algorithm is $\OTilde(\Tilde{n}^\omega\norm{B}_\infty)$ bit operations.
\end{theorem}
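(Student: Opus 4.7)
Since this is a cited result of Birmpilis, Labahn and Storjohann, I would not reprove it in full; instead I sketch the high-level strategy on which such bounds rest, so that the reader can see why the stated complexity is plausible before accepting it as a black box.

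The plan is to build on Dixon's $p$-adic lifting scheme. First, pick a small prime $p$ not dividing $\det(B)$, so that $B$ is invertible modulo $p$, and compute $B^{-1} \bmod p$ once. Then iteratively produce the $p$-adic expansion $x \equiv x_0 + x_1 p + \ldots + x_{k-1} p^{k-1} \pmod{p^k}$ of the rational solution $B^{-1}b$. Each digit $x_i \in \{0,\ldots,p-1\}^{\Tilde{n}}$ is obtained from a matrix-vector product $B^{-1} r_i \bmod p$, where $r_i$ is a residual updated by a single multiplication $B \cdot x_i$ and a subtraction. Once enough digits have been collected, rational reconstruction (via continued fractions or the extended Euclidean algorithm) recovers the numerators and denominators of the entries of $x$.

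By Cramer's rule and Hadamard's bound, the entries of $B^{-1} b$ have numerator and denominator of magnitude at most $\Tilde{n}^{\Tilde{n}/2}\norm{B}_\infty^{\Tilde{n}} \max(1,\norm{b}_\infty)$, so under the stated input-size assumption $\log\norm{b}_\infty \in \OTilde(\Tilde{n}\log\norm{B}_\infty)$ the number of lifting steps is $k = \OTilde(\Tilde{n}\log\norm{B}_\infty)$. Performing each step naively with $\OTilde(\Tilde{n}^2)$ bit operations yields $\OTilde(\Tilde{n}^3 \log\norm{B}_\infty)$ overall; the point of the cited papers is to drop this to $\OTilde(\Tilde{n}^\omega \log\norm{B}_\infty)$. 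The key ingredient is Storjohann's high-order lifting: rather than compute one digit per matrix-vector product, one batches $\Theta(\Tilde{n})$ consecutive lifting steps and executes them through a single $\Tilde{n} \times \Tilde{n}$ matrix-matrix multiplication, so that fast matrix multiplication can be amortized across the lifting.

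The hard part, which is precisely what the cited work \cite{DBLP:conf/issac/BirmpilisLS19,DBLP:conf/issac/LiS22} carries out carefully, is organizing the block-lifting recurrence and the rational reconstruction so that intermediate integers stay within $\OTilde(\Tilde{n}\log\norm{B}_\infty)$ bits and so that the total number of $\Tilde{n} \times \Tilde{n}$ multiplications is only $\OTilde(1)$ per $\Tilde{n}$ lifting digits. For the purposes of the present paper, we import this result as a black box and apply it to solve the single linear system that appears in each iteration of \autoref{alg:solutionupdateversion}.
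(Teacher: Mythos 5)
The paper does not prove this statement; it is imported verbatim as a cited result from Birmpilis--Labahn--Storjohann and Li--Storjohann, so there is no internal proof to compare against. Your decision to treat it as a black box, while sketching the underlying Dixon $p$-adic lifting plus Storjohann high-order lifting strategy that makes the $\Tilde{n}^\omega$ factor achievable, is exactly the right level of detail, and the sketch itself is accurate: the roles of Cramer/Hadamard bounds in determining the lifting precision, the batching of $\Theta(\Tilde{n})$ digits per matrix multiplication, and rational reconstruction at the end are all correctly identified. One small point worth flagging: the theorem as stated in the paper reads $\OTilde(\Tilde{n}^\omega\norm{B}_\infty)$, which is clearly a typo for $\OTilde(\Tilde{n}^\omega\log\norm{B}_\infty)$; your sketch implicitly uses the corrected form, which matches the cited source and the way the bound is actually invoked later in the paper (e.g.\ in the proof of Lemma~\ref{lem:solve_row}).
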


We use the following lemma for calculations involving a vector with large coefficients such as computation of the remainder of our modulo operations.

\begin{lemma}[\cite{DBLP:conf/issac/BirmpilisLS19}]\label{lem:dimension_size_tradeoff}
Let $B\in\zz^{\Tilde{n}\times \Tilde{n}}$ and $N\in\zz_{>0}$ be a power of $2$ such that $\log N \in O(\log(\Tilde{n}\norm{B}_\infty))$. If $C\in\zz/(N^p)^{\Tilde{n}\times \Tilde{m}}$ with $\Tilde{m} p\in O(\Tilde{n})$, then $\rem(BC, N^p)$ can be computed in with bit complexity \[\OTilde(\Tilde{n}^\omega\log\norm{B}_{\infty}).\]
\end{lemma}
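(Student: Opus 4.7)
The plan is to exploit the bound $\Tilde{m}p \in O(\Tilde{n})$ to repackage the high-precision matrix $C$ into a matrix that is still of side $\Tilde{n}$ but whose entries are small, after which a single fast matrix multiplication suffices. Since this lemma is attributed to Birmpilis--Labahn--Storjohann, I would follow the standard ``$N$-adic expansion and horizontal packing'' idea used there.

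First I would write each entry of $C$ in base $N$. Because $N$ is a power of two, an entry in $\zz/(N^p)$ is just a bit-string of length $p\log N$ naturally cut into $p$ digits. This yields a decomposition $C = \sum_{i=0}^{p-1} N^i C^{(i)}$ where each $C^{(i)} \in \zz^{\Tilde{n}\times \Tilde{m}}$ has entries of absolute value less than $N$, hence bit length $\log N \in O(\log(\Tilde{n}\norm{B}_\infty))$. Then I would concatenate these horizontally into one matrix $\hat{C} := [C^{(0)} \mid C^{(1)} \mid \cdots \mid C^{(p-1)}] \in \zz^{\Tilde{n} \times \Tilde{m}p}$ and observe that by hypothesis $\Tilde{m}p \in O(\Tilde{n})$, so $\hat{C}$ has only $O(\Tilde{n})$ columns.

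Second I would compute $B\hat{C}$ with one call to fast matrix multiplication. This is the product of an $\Tilde{n}\times \Tilde{n}$ matrix with entries bounded by $\norm{B}_\infty$ and an $\Tilde{n}\times O(\Tilde{n})$ matrix with entries bounded by $N \in \OTilde(\norm{B}_\infty)$; using standard blocking and integer-multiplication-based matrix multiplication, this takes $\OTilde(\Tilde{n}^\omega \log\norm{B}_\infty)$ bit operations. From $B\hat{C}$ I simply read off the blocks $BC^{(0)}, \ldots, BC^{(p-1)}$ and combine them via $BC = \sum_{i=0}^{p-1} N^i (BC^{(i)})$, then reduce modulo $N^p$. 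Because $N$ is a power of two, the $N^i$-multiplications are bit-shifts and the reduction modulo $N^p$ is truncation; accumulating the $p$ shifted summands into a single $\Tilde{n}\times \Tilde{m}$ result of bit length $p\log N$ costs $\OTilde(\Tilde{n}\Tilde{m}p \log(\Tilde{n}\norm{B}_\infty)) \subseteq \OTilde(\Tilde{n}^2 \log\norm{B}_\infty)$, which is subsumed by the matrix multiplication cost.

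The step I expect to be most delicate is justifying that the matrix multiplication truly runs in $\OTilde(\Tilde{n}^\omega \log\norm{B}_\infty)$ rather than paying an extra factor from the larger width of $\hat{C}$ or from the bit length of its entries. This is exactly where the hypothesis $\Tilde{m}p \in O(\Tilde{n})$ enters in a non-trivial way: it forces the packed matrix $\hat{C}$ to be essentially square with small entries, so the cost reduces to one balanced $n$-bit integer matrix product. Everything else (base-$N$ splitting, horizontal packing, shift-and-add reconstruction) is bookkeeping whose cost is dominated by that single matrix multiplication.
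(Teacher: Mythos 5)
Your argument is correct and is exactly the ``dimension $\times$ precision tradeoff'' proof used by Birmpilis, Labahn and Storjohann for their Lemma~2, which this paper simply cites rather than reproves: the base-$N$ split of $C$, horizontal packing into an $\Tilde{n}\times O(\Tilde{n})$ matrix with $\OTilde(\log\norm{B}_\infty)$-bit entries, one fast matrix product, and shift-and-add reassembly modulo $N^p$. The only point worth tightening in a write-up is that ``reduction modulo $N^p$ is truncation'' should be phrased for signed integers, since $B$ (and hence $B\hat{C}$) may have negative entries, but this is routine and does not affect the claimed $\OTilde(\Tilde{n}^\omega\log\norm{B}_\infty)$ cost.
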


In order to quickly perform our pivoting rule, a new subproblem arises. We need to locate the next modulo and exchange step and thus require to efficiently find non-integral components of a row of the solution matrix $X := B^{-1}C$. The following lemma shows that a row of the solution matrix can be computed with similar bit complexity as a column.

\begin{lemma}\label{lem:solve_row}
Consider a full rank matrix $B\in\zz^{\Tilde{n}\times \Tilde{n}}$,  a matrix $C\in\zz^{\Tilde{n}\times \Tilde{m}}$, and $\delta \in \nn$ such that $\norm{B}_\infty \leq \delta$ and $\norm{C}_\infty \leq \delta$. Let $X\in \qq^{\Tilde{n}\times\Tilde{m}}$ be the solution matrix for $BX=C$. Any row $i\leq \Tilde{n}$ of the solution matrix $X$ can be computed using $\OTilde(\max\{\Tilde{n},\Tilde{m}\}\Tilde{n}^{\omega -1}\log\delta)$ bit operations.
\end{lemma}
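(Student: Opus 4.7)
\noindent\textit{Proof plan.} The key observation is that the $i$-th row of $X$ equals $e_i^T X = e_i^T B^{-1} C = y^T C$, where $y := B^{-T} e_i \in \qq^{\tilde n}$. Rather than computing all of $X$, the plan is therefore to solve a single transposed linear system for $y$ and then apply the row vector $y^T$ to $C$ using a tailored matrix-vector multiplication that exploits the fact that $y$ is a single vector with large entries while $C$ has many columns each of small bit-length.

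For the first step, I would invoke Theorem~\ref{theo:system_solving} on $B^T$ (whose infinity norm is still bounded by $\delta$) with right-hand side $e_i$. Since $\log\norm{e_i}_\infty = 0 \in \OTilde(\tilde n\log\delta)$, the size hypothesis of the theorem is trivially satisfied and we obtain $y \in \qq^{\tilde n}$ in $\OTilde(\tilde n^\omega \log\delta)$ bit operations. Writing $y = p/d$ with $p \in \zz^{\tilde n}$ and common integer denominator $d$ (for instance $d = \det B$, which is legitimate by Cramer's rule), Hadamard's bound gives $\norm{p}_\infty, d \leq (\tilde n\delta)^{\tilde n/2}$, so each of $p$ and $d$ has bit-length $\OTilde(\tilde n\log\delta)$.

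For the second step the task reduces to computing the integer vector $C^T p$ and then dividing each coordinate by $d$. Every entry of $C^T p$ has absolute value at most $\tilde n \delta \cdot \norm{p}_\infty \leq \tilde n\delta \cdot (\tilde n\delta)^{\tilde n/2}$, so I would choose $N$ to be a power of $2$ with $\log N = \Theta(\log(\tilde n\delta))$ and an integer $q = \Theta(\tilde n)$ large enough that $N^q/2$ exceeds this bound; then every entry is uniquely recoverable from its residue modulo $N^q$ via symmetric representation in $[-N^q/2, N^q/2)$. I would then partition the rows of $C^T$ into $\lceil \tilde m/\tilde n\rceil$ blocks $\tilde B^{(1)}, \ldots, \tilde B^{(\lceil\tilde m/\tilde n\rceil)}$ of at most $\tilde n$ rows each (zero-padding the last block to full size $\tilde n \times \tilde n$ if $\tilde n \nmid \tilde m$) and apply Lemma~\ref{lem:dimension_size_tradeoff} to each block, with $\tilde B^{(k)}$ playing the role of ``$B$'' and $p$ viewed as a single column of entries of size $N^q$ playing the role of ``$C$''. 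The lemma's dimension constraint becomes $1 \cdot q \in O(\tilde n)$, which is satisfied, and each application costs $\OTilde(\tilde n^\omega\log\delta)$ bit operations. Summing over the $O(\lceil\tilde m/\tilde n\rceil)$ blocks yields the target $\OTilde(\max\{\tilde m,\tilde n\}\tilde n^{\omega-1}\log\delta)$. A final coordinate-wise division by $d$ costs $\OTilde(\tilde m\tilde n\log\delta)$ and is absorbed into the overall bound.

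The main obstacle is the bookkeeping in the second step: the shape of $C^T$ (which is $\tilde m\times \tilde n$) and the single large vector $p$ must be fit into the $\tilde n\times\tilde n$ input format required by Lemma~\ref{lem:dimension_size_tradeoff} while simultaneously respecting the product constraint on that lemma's column count and digit parameter, and while ensuring that signed integer results are recoverable from unsigned residues modulo $N^q$. The row-block decomposition of $C^T$ is precisely what turns the per-block $\OTilde(\tilde n^\omega\log\delta)$ cost into the claimed $\OTilde(\max\{\tilde m,\tilde n\}\tilde n^{\omega-1}\log\delta)$ bound, and picking $q = \Theta(\tilde n)$ just large enough is what makes the lemma applicable with a single column.
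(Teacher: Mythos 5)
Your proof is correct and follows essentially the same route as the paper's: solve $B^\intercal y = e_i$ via Theorem~\ref{theo:system_solving}, clear denominators, and then compute $C^\intercal(\mu y)$ by tiling $C^\intercal$ into $\lceil\tilde m/\tilde n\rceil$ square blocks and invoking Lemma~\ref{lem:dimension_size_tradeoff} with the scaled solution as a single high-precision column (the paper writes $\lceil\tilde n/\tilde m\rceil$, which appears to be a typo; your $\lceil\tilde m/\tilde n\rceil$ is the intended count). The only slight imprecision is your parenthetical choice $d=\det B$ as the common denominator — you don't have $\det B$ directly from the solver, so in practice one takes $d$ to be the lcm of the returned denominators (computable in $\OTilde(\tilde n^2\log\delta)$ bit operations and a divisor of $\det B$ by Cramer's rule), which is exactly what the paper does.
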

\begin{proof}
The procedure is as follows. First, we compute $y\in\qq^{\Tilde{n}}$ such that $B^\intercal y = e_i$. This is the same as the $i$th row of $Y := I_{\Tilde{n}} B^{-1}$, where $I_{\Tilde{n}}$ is the identity matrix of dimension ${\Tilde{n}}$. In other words, $Y$ is the inverse of $B$ and $y$ is the $i$th row of the inverse of $B$. Then we compute an integer $\mu \leq \det B$ such that $\mu y$ is integral.  Finally, we compute $z$ such that $\frac{1}{\mu}C^\intercal (\mu y)$. It is obviously the same to compute $\frac{1}{\mu}\mu y^\intercal C = y^\intercal C $. Since $y$ is the $i$th row of the inverse of $B$ we have that $z$ is the $i$th row of the solution matrix $X=B^{-1}C$.

We can compute $y$ with \autoref{theo:system_solving} using $\OTilde(\Tilde{n}^\omega\log\delta)$ bit operations. The integer $\mu$ can be found in $\OTilde(\Tilde{n}^2\log\delta)$ bit operations.\footnote{One way to do this is as follows. Let $d_1,\ldots,d_{\Tilde{n}}$ be the denominators of $y$. Compute the greatest common divisor of $d_1$ and $d_2$ and $\frac{d_1\cdot d_2}{\mathrm{gcd}(d_1,d_2)}$ to obtain the least common multiple of $d_1$ and $d_2$. Continue with the least common multiple and $d_3$ and eventually obtain the least common multiple of $d_1, \ldots,d_{\Tilde{n}}$. Due to Cramer's rule $\mu := \mathrm{lcm}(d_1, \ldots,d_{\Tilde{n}})$ is at most $\det(B)$.}
The matrix vector multiplication to compute $z$ can be done in $\OTilde(\max\{\Tilde{n},\Tilde{m}\}\Tilde{n}^{\omega-1}\log\delta)$ bit operations using \autoref{lem:dimension_size_tradeoff} $\lceil\frac{\Tilde{n}}{\Tilde{m}}\rceil$ times for $p:= \Tilde{n}$. Scaling $y$ and the result of the matrix-vector multiplication each costs $\OTilde(\Tilde{n}^2\log\delta)$ bit operations. 
\end{proof}

\subsection{General Version}

\begin{algorithm}
  \caption{Generalized Euclidean Algorithm (Bit Complexity)
    \label{alg:easyversion_small_numbers}}
  \begin{algorithmic}[1]
    \Statex
    \textsc{Input: } A matrix $A=\left(A_1,\ldots, A_m \right) \in \zz^{n\times m}$ \\
    \textbf{find} independent vectors $B := \left(B_1, \ldots, B_n\right)$ with $B_i\in \{A_1,\ldots, A_m\}$
    \Let{$C$}{$\{A_1,\ldots, A_m\}\setminus \{B_1,\ldots, B_n\}$}
    \Let{$i$}{$1$}
    \While{$i\leq n$}
        \State{\textbf{compute} the $i$th row $z$ of the solution matrix $X:= B^{-1}C$}
        \If{$z$ integral}
            \Let{$i$}{$i+1$}
        \Else
            \State{\textbf{choose any} $j$ such that $z_j$ is not integral}
            \State{\textbf{solve} $Bx =C_j$}
            \State{$C\leftarrow C\setminus\{C_j\}\cup \{B_i\}$}
            \Let{$B_i$}{$C_j - (\sum_{j\neq i}B_j\floor{x_j}  + B_i \nint{x_i})$}
        \EndIf
        \EndWhile
    \State \Return{$B$}
  \end{algorithmic}
\end{algorithm}

\begin{theorem}\label{theorem:easy_version_complexity}
\autoref{alg:easyversion_small_numbers} computes a basis for the lattice $\mathcal{L}(A)$ using at most $\OTilde(mn^\omega\log^2\norm{A}_{\infty})$ bit operations.
\end{theorem}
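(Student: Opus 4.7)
For correctness I would propagate the invariant $\lattice(A) = \lattice(B \cup C)$ exactly as in \autoref{theorem:basic_version_correctness}: each exchange replaces $C_j$ by an integral combination of vectors from $B \cup \{C_j\}$ and simultaneously swaps the discarded $B_i$ into $C$, neither of which changes the generated lattice. The new subtle point is that the outer index $i$ only ever moves upward. For this I would invoke \autoref{lem:updated_solution_matrix}: once row $k$ of $X := B^{-1}C$ has become integral all fractional parts $\{X_{k\ell}\}$ vanish, and the lemma's formulas show that any later exchange at some row $i > k$ leaves row $k$ integral (the updated $X'_{kj} = 0$ and $X'_{k\ell} = X_{k\ell}$). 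Hence when the loop exits at $i = n+1$, the whole matrix $B^{-1}C$ is integral, so $C \subseteq \lattice(B)$ and the invariant yields $\lattice(B) = \lattice(A)$.

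The hard part is controlling the magnitude of all integers that appear, since a careless implementation of the exchange step could produce basis entries of order $(n\norm{A}_\infty)^n$ and wreck the bit complexity. I would use the row-by-row pivoting argument given just before \autoref{eq:bounded_basis_modulo_step}: while the algorithm is processing row $i$, all rows $< i$ of the solution matrix are already integral, so $B'_i = \sum_{j > i} B_j\{x_j\} - B_i(x_i - \nint{x_i})$. Every $B_j$ with $j > i$ is still an original input column, so $\norm{B_j}_\infty \leq \norm{A}_\infty$, and a single exchange therefore adds at most $(n-1)\norm{A}_\infty$ to $\norm{B_i}_\infty$; across the at most $\log\det B^{(1)} \in \OTilde(n \log \norm{A}_\infty)$ exchange steps this yields $\norm{B}_\infty \in \OTilde(n^2 \norm{A}_\infty)$. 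The columns of $C$ are either untouched inputs or discarded basis columns, so the same bound applies to $\norm{C}_\infty$, and every system solve performed by the loop therefore sees matrices whose entries have bit-length $\OTilde(\log \norm{A}_\infty)$.

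It remains to aggregate the per-iteration cost. The initial independent subsystem is selected via \autoref{lem:maximal_subsystem} in $\OTilde(mn^{\omega-1}\log\norm{A}_\infty)$ bit operations. The while loop then executes at most $\log\det B^{(1)} + n \in \OTilde(n \log \norm{A}_\infty)$ iterations, since each iteration either performs one exchange (there are at most $\log\det B^{(1)}$ of these by the halving of $\det B$) or increments $i$ (at most $n$ of these). Within a single iteration the dominant cost is computing one row of $B^{-1}C$: by \autoref{lem:solve_row} with $\delta \in \OTilde(n^2 \norm{A}_\infty)$, this takes $\OTilde(\max\{n, m-n\}\,n^{\omega-1}\log\norm{A}_\infty) = \OTilde(mn^{\omega-1} \log \norm{A}_\infty)$ bit operations; the subsequent system solve $Bx = C_j$ costs $\OTilde(n^\omega \log \norm{A}_\infty)$ by \autoref{theo:system_solving}, and the explicit update of $B_i$ fits in the same bound. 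Multiplying the iteration count by this dominant per-iteration cost produces the claimed $\OTilde(mn^\omega \log^2 \norm{A}_\infty)$ bit operations.
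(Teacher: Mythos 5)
Your proof follows essentially the same route as the paper's: you use Lemma~\ref{lem:updated_solution_matrix} to argue that once a row of $X = B^{-1}C$ becomes integral it stays integral under later exchanges, derive from the row-ordering the per-exchange growth bound $\norm{B'_i}_\infty \leq \norm{B_i}_\infty + (n-1)\norm{A}_\infty$ leading to $\norm{B}_\infty \in \OTilde(n^2\norm{A}_\infty)$, and then multiply the $\OTilde(n\log\norm{A}_\infty)$ iteration count by the $\OTilde(mn^{\omega-1}\log\norm{A}_\infty)$ cost of computing one row of $X$ via Lemma~\ref{lem:solve_row}. The reasoning, the invoked lemmas, and the cost decomposition all match the paper's argument.
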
 
\begin{proof}

We want to prove correctness by proving that \autoref{alg:easyversion_small_numbers} performs the exchange steps from \autoref{alg:easyversion} but in a more specified order.

\begin{addmargin}[2em]{0em}
\noindent\textit{Claim.} Consider iteration $i\leq n$. For any $i'<i$ all solutions $x^{(c)} = B^{-1}c$ for $c\in C$ are integral at index $i'$.
\end{addmargin}

\noindent By lines 6 and 7 solution index $x^{(c)}_i$ for $Bx^{(c)}=c$ is integral for all $c\in C$ when $i$ is set to $i+1$ in line 7. Thus, we need to prove that this remains true after an exchange step in lines 11-12. 

Consider any $i'<i$ and right hand side $c$ used for the exchange step. Let $x^{(c)} := B^{-1}c$ and $x^{(c')}:= B^{-1}c'$ for any $c' \in C$ with $c'\neq c$. By \autoref{lem:updated_solution_matrix} the updated solutions of $c$ and $c'$ at index $i$ are 
\begin{align}
    \frac{-\{x^{(c)}_{i'}\} }{x^{(c)}_i - \nint{x^{(c)}_i}}  \quad\text{and}\quad  x^{(c')}_{i'} + \frac{-x^{(c')}_i \cdot \{x^{(c)}_{i'}\} }{x^{(c)}_i - \nint{x^{(c)}_i}},\quad\text{respectively.} 
\end{align}
Since after iteration $i$ all solutions are integral at index $i$ by lines 6-7, this implies that the exchange step in line 11-12 does keep the property that for any $i'<i$ all solutions are integral at index $i'$. \medskip

The claim implies that all solutions are integral when the algorithm terminates. Therefore, correctness of the algorithm follows from \autoref{theorem:basic_version_correctness} since \autoref{alg:easyversion_small_numbers} only selects the next exchange step in a specified order compared to \autoref{alg:easyversion}.

Concerning the running time, we start off by bounding size of the numbers involved.  By the definition of the exchange step any new vector 
\[B_i = \sum_{j\neq i}B_j\{x_j\} + B_i(x_i - \nint{x_i})\] 
is the sum of $B_j y_j$ for $|y_j|\leq 1$ and $j\geq i$ since for any $j<i$ the solution at index $j$ is integral by the claim and thus the fractional component is $0$. Any $B_j$ for $j > i$ is unchanged after line 1 and thus $\norm{B_j}_\infty \leq \norm{A}_{\infty}$. If $B_i$ and $B_i'$ are the state of the $i$th vector before and after the exchange step in line 11-12, respectively, then 
\begin{align}
    \norm{B_i'}_\infty \leq \norm{B_i}_\infty + \sum_{j>i}\norm{B_j}_{\infty} \leq \norm{B_i}_\infty + (n-1)\norm{A}_{\infty}.
\end{align} 
Let $B^{(1)}$ be the basis in line 1 and $B^{(\ell)}$ the returned basis. Since there are in total at most $\log\det(B^{(1)})$ exchange steps, the returned (and every intermediate) basis is bounded by
\begin{align*}
    \norm{B^{(\ell)}}_\infty \leq \log\det(B^{(1)}) \cdot n\norm{A}_{\infty} \leq n^2\norm{A}_{\infty} \log(n\norm{A}_{\infty}) =: \delta.
\end{align*}

By Hadamard's inequality and Cramer's rule the numerator and denominator of solutions $x = B^{-1}c$ for $c\in C$ are bounded by determinants of $B$ and $B|c$, where one column of $B$ is exchanged by $c$, respectively, and due to the bounded entry size this is $\leq (n\delta)^n$ in every iteration. 

By \autoref{lem:maximal_subsystem} the set of linearly independent columns can be found with $\OTilde(mn^{\omega-1}\log\norm{A}_{\infty})$ bit operations.  
Every iteration of the while loop either increases $i$ or performs an exchange step. Hence, there are at most $n + \log \det (B^{(1)}) = O(n\log(n\norm{A}_{\infty}))$ iterations. 

The $i$th row of the solution matrix can be found in $\OTilde(\max\{m-n,n\}n^{\omega-1}\log\delta)$ bit operations using \autoref{lem:solve_row}. In line 10 a linear system is solved. All numbers involved are bounded by $\delta$ and thus the linear system can be solved in $\OTilde(n^\omega\log\delta)$ bit operations. Considering line 12, let $\Tilde{x} \in \zz^n$ be defined as $\Tilde{x}_j = \floor{x_j}$ for $j\neq i$ and $\Tilde{x_i} := \nint{x_i}$. The updated column is then $B_i \leftarrow C_j - B\Tilde{x}$, where the latter can be computed in $\OTilde(n^\omega \log\delta)$ bit operations using \autoref{lem:dimension_size_tradeoff} since $\norm{\Tilde{x}}_\infty \leq (n\delta)^n$ and can be scaled to an integral vector similar to the proof of \autoref{lem:solve_row}. 
Overall the number of bit operations for \autoref{alg:easyversion_small_numbers} is bounded by 
\begin{align*}
    \OTilde((n+\log(\det B^{(1)}))\cdot((m-n)n^{\omega-1}\log\delta + n^{\omega}\log{\delta}))&=\OTilde((n\log\delta)mn^{\omega-1}\log\delta )\\ &= \OTilde(mn^{\omega}\log^2\norm{A}_{\infty}).
\end{align*}

\end{proof}

\subsection{Few Additional Vectors $m-n$}
Very recently Lin and Storjohann considered the special case that $m-n=k$ for a constant $k$~\cite{DBLP:conf/issac/LiS22}. In this section we present a variant of our generalized Euclidean algorithm that improves the general running time in the case that $m-n$ is small but not necessarily constant, e.g. the running time dependence on $m$ and $n$ is improved for any instance with $m-n \in \OTilde(n^{\omega -2})$. The procedure is almost identical to \autoref{alg:solutionupdateversion}.

\begin{algorithm}
  \caption{Generalized Euclidean Algorithm (Bit Complexity, $m-n$ small)
    \label{alg:easyversion_small_numbers_m-n-small}}
  \begin{algorithmic}[1]
    \Statex{
    \textsc{Input: }  A matrix $A=\left(A_1,\ldots, A_m \right) \in \zz^{n\times m}$ }
    \State{\textbf{find} independent vectors $B := \left(B_1, \ldots, B_n\right)$ with $B_i\in \{A_1,\ldots, A_m\}$}
    \State{\textbf{let} $C$ be a matrix with columns $\{A_1,\ldots, A_m\}\setminus \{B_1,\ldots, B_n\}$}
    \State{\textbf{compute} $X\leftarrow B^{-1}C$}
    \Let{$Y$}{$I_n$} \Comment{Invariant $B^{(\ell +1)} = B^{(1)}Y^{(\ell)}$}
    \While{$X$ not integral}
      \State{\textbf{choose minimal} $i\leq n$ and \textbf{any} $j\leq m-n$ s.t. $X_{ij}$ not integral}
      \State{$v_i \leftarrow X_{ij} - \nint{X_{ij}}$ and $v_j \leftarrow \{X_{kj}\}$ for all $k\neq i$}
      \Let{$Y$}{$Y \cdot \left(e_1, \ldots, e_{i-1}, v, e_{i+1}, \ldots, e_n \right)$} 
      \State{\textbf{update} X}
    \EndWhile
    \State \Return{$BY$}
    \end{algorithmic}
\end{algorithm}

\begin{theorem}\label{theo:small_m-n}
\autoref{alg:easyversion_small_numbers_m-n-small} computes the basis of the lattice $\lattice(A)$ using \[\OTilde((m-n)n^3\log^2\norm{A}_{\infty} + n^{\omega(2)}\log\norm{A}_{\infty})\] 
bit operations.

\end{theorem}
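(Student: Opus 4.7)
The plan is to inherit correctness from \autoref{theo:solutionupdateversion}, since \autoref{alg:easyversion_small_numbers_m-n-small} is structurally identical to \autoref{alg:solutionupdateversion}, and then do the bit-complexity accounting that \autoref{theo:solutionupdateversion} postponed. The key observation that enables the analysis is that the minimal-$i$ pivoting in line 6 is exactly the rule discussed at the beginning of the bit-complexity section, so the uniform bound $\norm{B^{(t)}}_\infty \leq n^2\norm{A}_\infty\log(n\norm{A}_\infty) =: \delta$ holds throughout.

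With that bound in place, I would first establish size bounds on every number the algorithm stores. Applying Cramer's rule and Hadamard's bound to $B^{(t)}x = c$ with $\norm{c}_\infty \leq \norm{A}_\infty$ yields that numerators and denominators of entries of the current solution matrix $X^{(t)}$ are bounded by $(n\delta)^n$, hence of bit length $\OTilde(n\log\norm{A}_\infty)$. The same bound applies to the entries of $Y^{(t)}$ via the invariant $B^{(t+1)} = B^{(1)}Y^{(t)}$. Consequently, a single arithmetic operation on these quantities costs $\OTilde(n\log\norm{A}_\infty)$ bit operations, which will be the universal per-operation rate used in the rest of the analysis.

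Next, I would cost each step of the algorithm in turn. Line 1 costs $\OTilde(mn^{\omega-1}\log\norm{A}_\infty)$ by \autoref{lem:maximal_subsystem}, and is absorbed. The while loop runs at most $\log\det(B^{(1)}) = \OTilde(n\log\norm{A}_\infty)$ times by the determinant-halving argument. In each iteration the $X$-update via \autoref{lem:updated_solution_matrix} touches $O((m-n)n)$ entries, each a constant number of rational arithmetic operations, while the $Y$-update collapses to $O(n)$ arithmetic operations by the observation used in the proof of \autoref{theo:solutionupdateversion}: row-by-row pivoting keeps the columns of $Y$ indexed by $k>i$ equal to $e_k$. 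The per-iteration bit cost is therefore $\OTilde((m-n)n\cdot n\log\norm{A}_\infty) = \OTilde((m-n)n^2\log\norm{A}_\infty)$, giving a total loop cost of $\OTilde((m-n)n^3\log^2\norm{A}_\infty)$. The closing multiplication $BY$ in line 11 costs $\OTilde(n^{\omega+1}\log\norm{A}_\infty)$ and is absorbed into the $n^{\omega(2)}$ term.

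The main obstacle will be bounding line 3 by $\OTilde(n^{\omega(2)}\log\norm{A}_\infty)$. Computing $X = B^{-1}C$ naively, column by column via \autoref{theo:system_solving}, costs $\OTilde((m-n)n^\omega\log\norm{A}_\infty)$, which is too expensive for the target. The fix is to invoke \autoref{theo:system_solving} once (or a variant thereof) to obtain an integer scaled inverse of $B$ together with $\det(B)$ in $\OTilde(n^\omega\log\norm{A}_\infty)$ bit operations, and then to multiply this $n\times n$ matrix of $\OTilde(n\log\norm{A}_\infty)$-bit integers by the $n\times (m-n)$ matrix $C$ of $\log\norm{A}_\infty$-bit integers using fast rectangular matrix multiplication (together with \autoref{lem:dimension_size_tradeoff} to handle the coefficient sizes), which takes $\OTilde(n^{\omega(2)}\log\norm{A}_\infty)$ bit operations, where $\omega(2)$ is the appropriate rectangular-matrix-multiplication exponent. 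Verifying the preconditions of \autoref{theo:system_solving} and \autoref{lem:dimension_size_tradeoff} on the right-hand-side sizes, and checking that the rectangular product really fits within the $n^{\omega(2)}$ exponent, is the technically most delicate part; the remainder is routine bookkeeping on the arithmetic-operation count from \autoref{theo:solutionupdateversion} multiplied by the per-operation bit cost.
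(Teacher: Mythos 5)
The overall scaffolding is right — inherit correctness from the row-by-row pivoting of \autoref{alg:solutionupdateversion}, get the $\delta$-bound on the basis entries and the $(n\delta)^n$ bound on the rationals in $X$ and $Y$, and then multiply the per-iteration arithmetic counts by the $\OTilde(n\log\norm{A}_\infty)$ bit length. The loop term $\OTilde((m-n)n^3\log^2\norm{A}_\infty)$ comes out exactly as in the paper.

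However, you have mislocated where the rectangular matrix multiplication is actually needed, and the effect is that your accounting for line~3 and line~11 are each wrong in complementary ways. For line~3, solving the $m-n$ systems one column at a time with \autoref{theo:system_solving} costs $\OTilde((m-n)n^\omega\log\norm{A}_\infty)$, and since $\omega < 3$ this is already dominated by the loop term $\OTilde((m-n)n^3\log^2\norm{A}_\infty)$; no rectangular trick is required there, and in fact the plan of recovering a scaled inverse via \autoref{theo:system_solving} in $\OTilde(n^\omega\log\norm{A}_\infty)$ bit operations is itself dubious (that theorem solves a single right-hand side, and extracting the full scaled inverse by running it $n$ times is $\OTilde(n^{\omega+1}\log\norm{A}_\infty)$). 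Conversely, for line~11 you claim a naive cost of $\OTilde(n^{\omega+1}\log\norm{A}_\infty)$ and assert it is ``absorbed into the $n^{\omega(2)}$ term,'' but for current exponents $\omega + 1 \approx 3.37 > \omega(2) \approx 3.25$, so this does \emph{not} absorb. This is precisely the step where the paper clears denominators by the lcm (which is bounded by $\det B^{(1)}$), expands the scaled $Y$ to $\OTilde(n\log\norm{A}_\infty)$-bit precision in the style of \autoref{lem:dimension_size_tradeoff}, and performs a single $n\times n$ by $n\times O(n^2)$ rectangular multiplication to get $\OTilde(n^{\omega(2)}\log\norm{A}_\infty)$. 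So the ingredient you identify as the ``technically most delicate part'' is the right ingredient, but you apply it to a step that doesn't need it while leaving the step that does need it with a bound that exceeds the target.
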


\begin{proof}
The size of most intermediate numbers is bounded as in \autoref{theorem:easy_version_complexity}. Additionally, we need to bound the size of numbers in $Y$. Rephrased, $Y$ is the solution matrix for $B^{(1)}Y= B^{(\ell)}$. The size of numbers in $B^{(\ell)}$ is bounded by $\delta = n^2\norm{A}_{\infty}\log(n\norm{A}_{\infty})$ as in the proof of \autoref{theorem:easy_version_complexity}. Thus, denominators in $Y$ are bounded by $\det B^{(1)}$ and numerators in $Y$ are bounded by $\det (B^{(1)}|B^{(\ell)}_k) \leq (n\norm{A}_{\infty})^{O(n)}$, where the latter describes the matrix exchanging a column of $B^{(1)}$ with $B^{(\ell)}_k$. 

The set of independent vectors can be computed in the claimed time using \autoref{lem:maximal_subsystem}. For line 3 we compute $m-n$ solutions to linear systems. This is also the claimed time by \autoref{theo:system_solving}. Updating $Y$ costs $O(n)$ arithmetic operations as analyzed in \autoref{theo:solutionupdateversion}. Let $\delta$ be as in the proof of \autoref{theorem:easy_version_complexity}. The size of numbers involved is bounded by $\OTilde(n\log\delta)$ and thus lines 8-9 require $\OTilde(n^2\log\norm{A}_{\infty})$ bit operations. Updating the solution matrix $X$ can be done with \autoref{lem:updated_solution_matrix} using $O((m-n)n)$ arithmetic operations and due to the bounded size of numbers this requires at most $\OTilde((m-n)n^2\log\norm{A}_{\infty})$ bit operations. In every iteration of the while loop an exchange operation is performed. 
Thus, there are at most $\log(\det B^{(1)})$ iterations. 

In line 10 we can multiply the matrix by the least common multiple of the denominators (which is bounded by $\det B^{(1)}$), apply the matrix multiplication, and again divide by the least common multiple of denominators, similar to part of the proof of \autoref{lem:solve_row}.
Then the matrix multiplication can be solved by idea of \autoref{lem:dimension_size_tradeoff} the main complexity is to compute a matrix multiplication of dimensions $n\times n$ and $n\times n\cdot O(n)$. Using rectangular matrix multiplication~\cite{DBLP:conf/soda/GallU18,DBLP:conf/soda/AlmanW21} this can be done using $\OTilde(n^{\omega(2)}\log\norm{A}_{\infty})$ bit operations, where $\omega(k)$ is the exponent required to compute a matrix multiplication for dimensions $n\times n$ and $n\times n^k$. 


The bit complexity in total is bounded by
\begin{align*}
    \OTilde((m-n)n^{\omega}\log\norm{A}_{\infty} + mn^{\omega -1}\log\norm{A}_{\infty} + (m-n)n^2\log\det(B^{(1)})\log\norm{A}_{\infty} + n^{\omega(2)}\norm{A}_{\infty}).
\end{align*}
Using $(m-n)\in O(n^{\omega - 2})$ and the worst-case Hadamard bound on the determinant the running time simplifies to $\OTilde((m-n)n^3\log^2\norm{A}_{\infty} + n^{\omega(2)}\norm{A}_{\infty})$ bit operations.
\end{proof}

\subsection{Small $n\times n$ Minors}

In this section we give an algorithm which is very efficient in the case that $\det B^{(1)}$ is small. This is often the case when considering specific matrix classes. For example, a prominent class of matrices that is often considered in integer programming, is the class of matrices $A$ where the absolute value of all subdeterminants are bounded by some small $\Delta$.

Now consider again \autoref{alg:easyversion_small_numbers_m-n-small}. The number of iterations of the while loop scales the complexity by $\log(\det B^{(1)})$. So, if the determinant is small, the bit complexity for lines 5-10 also decreases. In contrast to other algorithms~\cite{DBLP:journals/siamcomp/HafnerM91, DBLP:conf/issac/StorjohannL96} the following algorithm directly benefits from small minors and does not require the approximate size as input or to compute any determinant. In order to achieve an improved running time, we analyze the algorithm for solving a system of linear equations from \cite{DBLP:conf/issac/BirmpilisLS19} for a matrix right-hand side.

\begin{corollary}\label{theo:solution_matrix}
The algorithm \texttt{solve} in \cite{DBLP:conf/issac/BirmpilisLS19} solves a system $X = B^{-1}C$ for an invertible matrix $B\in\zz^{{\Tilde{n}} \times {\Tilde{n}}}$ and a matrix $C \in\zz^{{\Tilde{n}} \times \Tilde{m}}$ using $\OTilde(\frac{{\Tilde{m}}\log(\Delta)}{{\Tilde{n}}}{\Tilde{n}}^{\omega}\log\delta) = \OTilde({\Tilde{m}}\log(\Delta){\Tilde{n}}^{\omega-1}\log\delta)$ bit operations, where $\Delta$ is the largest ${\Tilde{n}}\times {\Tilde{n}}$ minor of $(B_1,\ldots, B_{\Tilde{n}}, C_1,\ldots, C_{\Tilde{m}})$ and $\norm{B}_{\infty}\leq \delta$ and $\norm{C}_{\infty}\leq \delta$.
\end{corollary}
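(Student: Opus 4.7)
The plan is to derive the corollary from \autoref{theo:system_solving} by opening up the internal structure of the BLS \texttt{solve} algorithm and replacing its generic bound on the solution size (the Hadamard estimate) by the sharper bound that $\Delta$ provides. The key input is Cramer's rule: each column $X_j = B^{-1}C_j$ equals, up to the common denominator $|\det B| \le \Delta$, a vector whose entries are $\Tilde{n}\times \Tilde{n}$ minors of $(B\mid C_j)$. Both the denominator and all numerators are therefore $\Tilde{n}\times \Tilde{n}$ minors of $(B_1,\dots,B_{\Tilde n},C_1,\dots,C_{\Tilde m})$ and hence bounded in absolute value by $\Delta$. In particular $\log\|X\|_\infty = O(\log\Delta)$.

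Next I would recall that \texttt{solve} from~\cite{DBLP:conf/issac/BirmpilisLS19} produces each solution column by a $p$-adic lifting (with $p$ of word size $\OTilde(\log\delta)$) followed by rational number reconstruction. The precision at which lifting has to be carried out is proportional to $\log N$, where $N$ is an a priori bound on numerator and denominator of the target rational: here $N \le \Delta$ rather than the generic Hadamard bound $(\Tilde n\|B\|_\infty)^{\Tilde n/2}$. Consequently the number of lifting steps per right-hand side drops from $\OTilde(\Tilde n\log\delta/\log p)$ to $\OTilde(\log\Delta/\log p)$, which is exactly the ratio distinguishing the corollary's bound from a column-by-column invocation of \autoref{theo:system_solving}.

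Finally I would amortize over the $\Tilde m$ columns of $C$. The preconditioning of $B$ (the mod-$p$ inverse or the equivalent ``double-plus-one'' precomputation of~\cite{DBLP:conf/issac/BirmpilisLS19}) is paid once at cost $\OTilde(\Tilde n^\omega\log\delta)$ bit operations. Each lifting round then reduces to multiplying a fixed $\Tilde n\times \Tilde n$ matrix over $\zz/p\zz$ (word size $\OTilde(\log\delta)$) by the current $\Tilde n\times \Tilde m$ residual; chunking $C$ into $\lceil\Tilde m/\Tilde n\rceil$ blocks of width $\Tilde n$ and applying fast matrix multiplication gives $\OTilde(\lceil\Tilde m/\Tilde n\rceil\Tilde n^\omega\log\delta) = \OTilde(\Tilde m\Tilde n^{\omega-1}\log\delta)$ bit operations per round. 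Multiplying by the $\OTilde(\log\Delta/\log\delta)$ rounds and adding the rational reconstruction on $\Tilde m\Tilde n$ quotients of bit length $\OTilde(\log\Delta)$ (handled by half-gcd in $\OTilde(\Tilde m\Tilde n\log\Delta)$ bit operations) yields the stated total
\[
\OTilde\bigl(\Tilde n^\omega\log\delta \;+\; \Tilde m\Tilde n^{\omega-1}\log\Delta\log\delta\bigr) \;=\; \OTilde\!\left(\tfrac{\Tilde m\log\Delta}{\Tilde n}\Tilde n^\omega\log\delta\right),
\]
where the one-time setup term is absorbed by the per-column work whenever $\Tilde m\log\Delta \ge \Tilde n$ and otherwise is trivially dominated by the claimed bound.

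The main obstacle will be checking that the \texttt{solve} routine of~\cite{DBLP:conf/issac/BirmpilisLS19} actually admits this two-level amortization without a hidden $\Tilde n$ factor in the per-column lifting cost. If the published interface only exposes vector right-hand sides, one may have to either (i) replace double-plus-one lifting by plain linear $p$-adic lifting, whose step count is manifestly $\OTilde(\log\Delta/\log p)$ and which accepts matrix right-hand sides by inspection, or (ii) invoke the matrix-right-hand-side variant already employed in \autoref{lem:dimension_size_tradeoff}; with either choice the argument above becomes pure bookkeeping on the existing ingredients.
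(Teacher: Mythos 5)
Your plan captures the two essential ingredients that the paper also uses: (a) Cramer's rule gives $|\det B|\le\Delta$ and bounds every numerator of $X$ by an $\Tilde n\times\Tilde n$ minor of $(B\mid C)$, hence by $\Delta$, so $\log\|X\|_\infty = O(\log\Delta)$; and (b) the per-round matrix products can be chunked and handled by rectangular fast multiplication, which is precisely the dimension-times-precision tradeoff of \autoref{lem:dimension_size_tradeoff}. So far you are on track.

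However, you flag but do not close what is in fact the entire content of the corollary. You write that the ``main obstacle'' is checking that the BLS \texttt{solve} routine actually admits the two-level amortization without hidden $\Tilde n$ factors, and you offer two escape hatches: replace double-plus-one lifting by linear $p$-adic lifting, or invoke \autoref{lem:dimension_size_tradeoff}. Option (i) changes the algorithm and so does not prove the claim about \texttt{solve} as stated; option (ii) is the correct idea, but it still has to be pushed through the actual structure of \texttt{solve}. The paper's proof does exactly that: it observes Steps~1 and 2 of \texttt{solve} are unaffected; in Step~3 the subroutine \texttt{SpecialSolve} needs the BLS ``dimension $\times$ precision'' invariant $\Tilde m\log\Delta\in O(\Tilde n\log(\Tilde n\delta))$, which may fail here, so the paper re-parameterizes by that quantity and reduces Step~3 to $\OTilde(\log\log\Delta)$ matrix products of an $\Tilde n\times\Tilde n$ matrix with $\OTilde(\log\delta)$-bit entries against an $\Tilde n\times\Tilde m$ matrix of magnitude $\Delta$, each handled by \autoref{lem:dimension_size_tradeoff}; and in Step~4 it unpacks the product $\rem(PM(2^eS^{-1})Y,2^d)$ via the $X$-adic column expansion of $M$ into a rectangular product of shape $\Tilde n\times\Tilde n'$ by $\Tilde n'\times\Tilde m\log\Delta$ with $\Tilde n'\le 2\Tilde n$. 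Those two step-by-step verifications are the proof; your sketch leaves both to the reader.

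Separately, there is a small internal inconsistency in your bookkeeping. You compute a per-round cost of $\OTilde(\Tilde m\Tilde n^{\omega-1}\log\delta)$ and $\OTilde(\log\Delta/\log\delta)$ rounds, whose product is $\OTilde(\Tilde m\Tilde n^{\omega-1}\log\Delta)$, yet you write the total as $\OTilde(\Tilde m\Tilde n^{\omega-1}\log\Delta\log\delta)$. The mismatch is in the favorable direction (your actual intermediate bound is smaller than the corollary claims), so the conclusion survives, but as written the derivation does not yield the formula you say it yields. Also note that your identification of the one-time setup as ``trivially dominated'' when $\Tilde m\log\Delta < \Tilde n$ is not quite right: in that regime the $\Tilde n^\omega\log\delta$ term dominates $\tfrac{\Tilde m\log\Delta}{\Tilde n}\Tilde n^\omega\log\delta$, so the claimed equality requires an extra additive term or the implicit assumption $\Tilde m\log\Delta\ge\Tilde n$.
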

\begin{proof}
We analyze their algorithm and how the running time changes by the modification in their notation. Also we only describe the differences in the analysis. On a high level, the main change is that we do not provide the so called dimension $\times$ precision invariant but instead parameterize by this quantity. We throughoutly make use of the dimension $\times$ precision tradeoff, where the idea of \autoref{lem:dimension_size_tradeoff} is used for matrix multiplications and since the size of numbers is bounded by $\Delta$ this results in matrix multiplications of dimension $\Tilde{n} \times \Tilde{n}$ and $\Tilde{n} \times \Tilde{m} \log\Delta$ with sufficiently bounded coefficients. Viewed as $\frac{\Tilde{m} \log\Delta}{\Tilde{n}}$ matrix multiplications the running time follows. 

If we analyse the algorithm for a matrix right-hand side, steps 1 and 2 do not change. In step 3 the subroutine \texttt{SpecialSolve} dominates the running time.  Corollary 7 in their paper requires the dimension $\times$ precision invariant $\Tilde{m}\cdot \log(\Delta) \in O(\Tilde{n}\log(\Tilde{n}\delta))$, which is not necessarily the case here. However, the running time is dominated by $\OTilde(\log\log(\Delta))$ matrix multiplications of an $\Tilde{n}\times \Tilde{n}$ matrix  with coefficients of magnitude $O(\Tilde{n}^2\norm{B})$ and an $\Tilde{n}\times \Tilde{m}$ matrix of magnitude $\Delta$ as by Cramer's rule numbers involved in this step are bounded by $\det B$ and $\det (B|C_i)$. Using \autoref{lem:dimension_size_tradeoff} (their Lemma~2) this can be computed in target time. 


Finally, Step 4 consists of matrix multiplications $\rem(PM(2^eS^{-1})Y, 2^d)$, where $d\in \OTilde(\Tilde{n}\log\delta)$. The first part $Z := (2^eS^{-1})Y$ involves just a diagonal matrix $S^{-1}$ and can be computed in time. For the multiplication $MZ$, we follow the steps from their paper. By their Lemma 17, the X-adic expansion of the columns of $M$ consists of $\Tilde{n}' \leq 2\Tilde{n}$ columns for $X$ the smallest power of $2$ such that $X\geq \sqrt{\Tilde{n}}\norm{A}$. Let $M' = \left(M_0 \ldots M_{p-1} \right)$ be the $X$-adic expansion of $M$, where $M_i\in\zz^{\Tilde{n}\times k_i}$ and $\sum{i <p}k_i = \Tilde{n}' \leq 2\Tilde{n}$. Let $ Z = \left( Z_0 \ldots Z_{p-1} \right)$ be the X-adic expansions of $Z$ and let $Z_i^{(k_i)}$ be the submatrix of the last $k$ rows. The matrix multiplication can be restored from the product
\begin{align}
    \begin{pmatrix}M_0 \ldots M_{p-1}\end{pmatrix}
    \begin{pmatrix}
    Z_0^{(k_0)} & Z_1^{(k_0)} & \ldots & Z_{p-1}^{(k_0)} \\
    & Z_0^{(k_1)} & \ldots & Z_{p-2}^{(k_1)} \\
    && \ddots & \vdots \\
    &&& Z_0^{(k_{p-1})}
    \end{pmatrix}.
\end{align}
The dimensions are $\Tilde{n} \times \Tilde{n}'$ and $\Tilde{n}' \times \Tilde{m}\log(\Delta)$ since the precision $p$ requires $p\geq \log \norm{Z}_\infty$ which is bounded by Cramer's rule. 

\end{proof}

Though, we already analyzed \autoref{alg:easyversion_small_numbers_m-n-small} for small $m-n$, next we will analyze it again for the case that all $n\times n$ minors of the input are small.

\begin{theorem}
\autoref{alg:easyversion_small_numbers_m-n-small} computes a basis for the lattice $\mathcal{L}(A)$ and the running time is \[\OTilde((m-n)n^2\log\delta\log\Delta + n\log^3\Delta + n^{\omega}\log^2\Delta)\] bit operations for $\Delta$ being the largest $n\times n$ minor of $A$.
\end{theorem}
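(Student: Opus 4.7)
Correctness of Algorithm~\ref{alg:easyversion_small_numbers_m-n-small} was already established in the proof of Theorem~\ref{theo:small_m-n}, so the plan is to redo only the bit-complexity accounting under the assumption that every $n\times n$ minor of $A$ has absolute value at most $\Delta$. The four blocks to analyse are unchanged from that earlier proof: the initial independent-subsystem search in line~1, the initial solve $X\leftarrow B^{-1}C$ in line~3, the exchange/update while-loop, and the final product $BY$ in line~10. At every step I will replace the generic Hadamard bound $(n\norm{A}_\infty)^n$ by $\Delta$ wherever possible.

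The key preliminary step, and the one I expect to be the main obstacle, is to bound the entries of the implicit matrices $X^{(\ell)}=(B^{(\ell)})^{-1}C^{(\ell)}$ and $Y^{(\ell)}=(B^{(1)})^{-1}B^{(\ell)}$ by $\Delta$. By Cramer's rule every such entry has denominator dividing $\det(B^{(\ell)})$, which in turn divides $\det(B^{(1)})\leq\Delta$, so denominators never exceed $\Delta$. The subtlety lies on the numerator side: after the first exchange step the matrices $B^{(\ell)}$ and $C^{(\ell)}$ are no longer submatrices of the original $A$, their columns being only $\zz$-combinations of original $A$-columns, so a direct appeal to the $n\times n$-minor hypothesis fails. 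I plan to resolve this by tracking the invariant $\bigl(B^{(\ell)}\mid C^{(\ell)}\bigr)=\bigl(B^{(1)}\mid C^{(1)}\bigr)\cdot U^{(\ell)}$ for an explicit integer transformation $U^{(\ell)}$ produced by the exchange history, so that every $n\times n$ minor of $\bigl(B^{(\ell)}\mid C^{(\ell)}\bigr)$ becomes an integer combination of $n\times n$ minors of $A$ rescaled by $\det(B^{(\ell)})/\det(B^{(1)})\leq 1$. Combined with the row-by-row pivoting rule of Section~4, which ensures $\norm{B^{(\ell)}}_\infty\leq\delta:=\OTilde(n^2\norm{A}_\infty)$, this yields the bit-length bound $\OTilde(n\log\delta)$ for every entry of $X$ and $Y$ throughout the execution.

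With these size bounds in place, the cost accounting is routine. Line~1 contributes $\OTilde(mn^{\omega-1}\log\norm{A}_\infty)$ by Lemma~\ref{lem:maximal_subsystem}; line~3, via Corollary~\ref{theo:solution_matrix} with $\tilde n=n$ and $\tilde m=m-n$, contributes $\OTilde((m-n)n^{\omega-1}\log\Delta\log\norm{A}_\infty)$, which is absorbed by $\OTilde(n^{\omega}\log^2\Delta)$ in the relevant regime. The loop runs at most $\log\det(B^{(1)})\leq\log\Delta$ iterations since each exchange step halves $|\det(B^{(\ell)})|$. In each iteration updating $X$ via Lemma~\ref{lem:updated_solution_matrix} costs $O((m-n)n)$ rational arithmetic operations on operands of bit length $\OTilde(n\log\delta)$, i.e.~$\OTilde((m-n)n^2\log\delta)$ bits per iteration and $\OTilde((m-n)n^2\log\delta\log\Delta)$ in total. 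Updating $Y$ touches a single column only and uses operands $v_k=\{X_{kj}\}$ of bit length $\OTilde(\log\Delta)$, for $\OTilde(n\log^2\Delta)$ bits per iteration and $\OTilde(n\log^3\Delta)$ overall.

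Finally, line~10 produces $BY$: after clearing the common denominator of $Y$, which is at most $\Delta$, the product becomes a multiplication of two integer $n\times n$ matrices with entries of bit length $\OTilde(n\log\delta)$, which by the dimension-for-precision decomposition of Lemma~\ref{lem:dimension_size_tradeoff} costs $\OTilde(n^\omega\log^2\Delta)$ bit operations. Summing the four contributions yields $\OTilde((m-n)n^2\log\delta\log\Delta + n\log^3\Delta + n^\omega\log^2\Delta)$, as claimed.
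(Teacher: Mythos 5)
Your cost accounting leans on a bit-length bound for the entries of $Y$ that is too weak, and it is precisely the bound the paper flags as insufficient. You assert that the entries of $Y^{(\ell)} = (B^{(1)})^{-1}B^{(\ell)}$ have bit length $\OTilde(n\log\delta)$, which is what Cramer's rule on the Hadamard bound for $B^{(1)}$ and $B^{(\ell)}$ gives, but this is exactly the estimate the paper explicitly says ``does not suffice'' in this small-$\Delta$ setting. With that bound, updating $Y'_i = Y_i v_i + \sum_{k\ge i}e_kv_k$ costs $\OTilde(n\cdot n\log\delta)$ per iteration (not $\OTilde(n\log^2\Delta)$), giving $\OTilde(n^2\log\delta\log\Delta)$ overall instead of $\OTilde(n\log^3\Delta)$; and the final product $BY$ has one factor with entries of bit length $\OTilde(n\log\delta)$, which after the dimension-precision decomposition costs $\OTilde(n^{\omega+1}\log\delta)$, not $\OTilde(n^\omega\log^2\Delta)$. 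In the regime where $\Delta$ is much smaller than the Hadamard bound — the only regime where this theorem is interesting — these are genuinely larger than the claimed terms. What the paper actually proves, by unrolling the column update $Y^{\ell+1}_{ki} = \{X_{kj}\} + Y^{\ell}_{ki}\cdot\det B^{(\ell+1)}/\det B^{(\ell)}$ (for $k>i$) inductively, is that numerators and denominators of the $Y$-entries are bounded by $\Delta^{O(\log\det B^{(1)})}$, hence bit length $O(\log^2\Delta)$; this tighter bound is what makes the $n\log^3\Delta$ and $n^\omega\log^2\Delta$ terms correct.

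Separately, the Cauchy--Binet argument you sketch for controlling the numerators does not go through. If $\bigl(B^{(\ell)}\mid C^{(\ell)}\bigr) = \bigl(B^{(1)}\mid C^{(1)}\bigr)U^{(\ell)}$, then an $n\times n$ minor of the left-hand side expands as $\sum_S \det\bigl((B^{(1)}\mid C^{(1)})_S\bigr)\det\bigl(U^{(\ell)}_{S,T}\bigr)$: the first factors are bounded by $\Delta$, but the second factors are minors of $U^{(\ell)}$, which are integer but a priori unbounded (unimodularity controls the full determinant, not submatrices). Your claim that the minors are ``rescaled by $\det(B^{(\ell)})/\det(B^{(1)})\le 1$'' does not follow from this decomposition, so the premise your subsequent size bounds rest on is not established.
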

\begin{proof}
The size of most intermediate numbers is bounded as in \autoref{theorem:easy_version_complexity}. Additionally, we need to bound the size of numbers in $Y$ and the bound from \autoref{theo:small_m-n} does not suffice. Let $Y^{(\ell)}$ be the state of $Y$ in the $i$th iteration. Consider in iteration $\ell$ where $X^{(\ell)}_{ij}$ was chosen for the modulo operation. Updating $Y^{\ell}$ only changes $Y^{\ell+1}_i = Y^{\ell}v$. The update for row index $k \leq i$ is 
\begin{align*}
    Y^{\ell +1}_{ki} &= \sum_{h \leq n}Y^{\ell}_{kh}v_h\\
    &= \sum_{h \neq i}Y^{\ell}_{kh}\{X_{hj}\} + Y^{\ell}_{ki}(X^{\ell}_{ij} - \nint{X^{\ell}_{ij}}) \\
    &= Y^{\ell}_{ki}(X^{\ell}_{ij} - \nint{X^{\ell}_{ij}})\\
    &= \frac{\det B^{(\ell+1)}}{\det B^{(\ell')}},
\end{align*}
where $\ell'$ is the first iteration considering row $i$. The update for row indices $k >i$ is 
\begin{align*}
    Y^{\ell +1}_{ki} &= \sum_{h \leq n}Y^{\ell}_{kh}v_h\\
    &= \sum_{h \neq i}Y^{\ell}_{kh}\{X_{hj}\} + Y^{\ell}_{ki}(X^{\ell}_{ij} - \nint{X^{\ell}_{ij}}) \\
    &= 1 \cdot \{X_{kj}\} + Y^{\ell}_{ki}\frac{\det B^{(\ell +1)}}{\det B^{\ell}}. 
\end{align*}
The denominators of $\{X_{kj}\}$ and $\frac{\det B^{(\ell +1)}}{\det B^{\ell}}$ are both divisors of $\det B^{\ell}$. Therefore numerator and denominator of $Y^{\ell +1}_{ki}$ are both bounded by $\Delta^{O(\log\det (B^{1}))}$.

The set of linearly independent columns can be found using $\OTilde(mn^{\omega-1}\log\norm{A}_{\infty})$ bit operations with \autoref{lem:maximal_subsystem}. The solution matrix $X$ can be computed using \autoref{theo:solution_matrix} in $\OTilde((m-n)\log(\det B)n^{\omega-1}\log\norm{A}_{\infty})$ bit operations. There are at most $\log(\det B)$ iterations of the while loop and all intermediate bases are bounded by $\delta$ in infinity norm. Thus, computing $v$ and updating $Y$ requires $O(n)$ arithmetic operations, which are at most $\OTilde(n^2\log\norm{A}_{\infty} + n\log^2\Delta)$ bit operations, depending on whether the bound on $X$ or the bound on $Y$ is larger. Updating $X$ requires $O((m-n)n)$ arithmetic operations and thus $\OTilde((m-n)n^2\log\norm{A}_{\infty})$ bit operations. Finally, line 10 computes the basis using a matrix multiplication. By the bound on the size of numerators and denominators of $Y$ we have that both are at most $O(\log^2\Delta)$, which also applies to the least common multiple of denominators. Thus, the matrix multiplication can be computed using $\OTilde(n^{\omega}\log^2(\Delta))$ bit operations. 



The total running time in bit operations is therefore bounded by 
\begin{align*}
    \OTilde((m-n)n^2\log\norm{A}_{\infty}\log\Delta + n\log^3\Delta + n^{\omega}\log^2\Delta).
\end{align*}
In the case that $\log\Delta \in O(n^{\omega -2})$ the running time simplifies to $\OTilde((m-n)n^2\norm{A}_{\infty}\log\Delta + n^{\omega}\log^2\Delta)$.
\end{proof}

\section{Modifications of the Algorithm}
In this section, we present how our algorithms can be modified to compute the determinant of a square matrix $B$ or to compute a solution of Diophantine system of equations.

\subsection{Computing the Determinant}
Our algorithms can be easily adapted to compute the determinant of a matrix $B$ for a given matrix $B \in \zz^{n \times n}$. We initialize the respective algorithm with the matrix $A = (B I)$, where $I$ is the identity matrix. By this we ensure that $\det(\lattice(A)) = 1$. The first line of finding a set of linearly independent vectors is skipped and instead set to $B$. 
Then we can simply keep track of the improvement to $\det B$ after each exchange operation. As explained, by Cramer's rule the determinant of the new basis $B'$ equals $(x_i - \lceil x_i \rfloor) \cdot \det B$. Multiplying the improvements over all exchange operations therefore yields the value $\det(B)/ \det(\lattice(A))$. For this, we only have to introduce a new variable $D$ and set $D = D \cdot (x_i - \lceil x_i \rfloor)$ whenever there is an exchange operations.

The running time of the respective algorithm remains the same with $m, m-n \in O(n)$.

\subsection{Solving Systems of Diophantine Equations}
The problem of solving a system of Diophantine equations is to compute $x$ such that
\begin{align} \label{dioph_equ}
    Ax = b\\
    x \in \zz^m.
\end{align}
for a given matrix $A \in \zz^{n \times m}$ and vector $b \in \zz^n$.

The classical Euclidean algorithm can be extended to compute $x,y \in \zz$ such that \begin{align*}
    ax + by = \gcd(a,b)
\end{align*}
and therefore solve Diophantine Equations of the form $a_1 x_1 + \ldots a_n x_n = b$, by applying the algorithm iteratively.
Similarly, our algorithm can be extended to compute a basis matrix $B \in \zz^{n \times n}$ with $\lattice(A) = \lattice(B)$ and matrix $U \in \zz^{m \times n}$ such that
\begin{align*}
    A U = B.
\end{align*}
Using $U$, one can solve (\ref{dioph_equ}) by first solving the linear system of equations $Bx = b$. The Diophantine equation (\ref{dioph_equ}) is feasible if and only if $x$ is integral and a solution to (\ref{dioph_equ}) is then given by $U x$.

The computation of $U$ can be realised very similar to the computation of $Y$ in e.g. \autoref{alg:solutionupdateversion}. Initially set the columns of $U$ to $e_i$ if this column of $A$ is the $i$th column of the initial basis $B$. For an exchange step
\begin{align*}
    B_i = c - (\sum_{j\neq i}B_j\floor{x_j} + B_i\nint{x_i})
\end{align*}
set $v_{j'} = -\floor{x_j}$, $v_{i'} = -\nint{x_i}$ and $v_{k} = 1$ if $c$ is the $k$th column and $j'$ and $i'$ are the current indices for the columns of $j$ and $i$, respectively. The exchange step can be expressed in $U$ as $U_i = Uv$. The index for basis column $i$ changes to $k$. However, note that this procedure requires an additional term of $\OTilde(mn^2\log(\det B^{(1)})\log\norm{A}_\infty)$ in bit complexity.

\section{Conclusion and Future Research}
Our novel approach for lattice basis computation provides the first running time improvement since 1996 based on a generalization of the Euclidean algorithm. However, this improvement applies only if we count arithmetic operations. A natural direction for future research would be to investigate whether this approach can improve also on the bit complexity in general. A similar approach like Schönhage~\cite{DBLP:journals/acta/Schonhage71} for the classical Euclidean algorithm might also work for the generalization that we presented.

Furthermore, it would be interesting to see how the algorithms perform in practice. Given that the determinant of the initial basis matrix $B$ should be smaller than the worst case Hadamard bound in most practical instances, our algorithms might actually perform rather well. Moreover, the improvement on the determinant on average in practice will be much better than $1/2$.

\bibliographystyle{alpha}
\bibliography{references}

\begin{acronym}
\acro{hnf}[HNF]{Hermite normal form}
\acro{snf}[SNF]{Smith normal form}
\acro{de}[DE]{Diophantine Equations}
\acro{lbr}[LBR]{Lattice Basis computation}
\end{acronym}

\end{document}